\documentclass{article}

\usepackage{amssymb}
\usepackage{amsmath}
\usepackage{amsthm}
\usepackage{algorithm}
\usepackage{algorithmic}
\usepackage{fullpage}

\newtheorem{thm}{Theorem}[section]
\newtheorem{clm}[thm]{Claim}

\newtheorem{lem}[thm]{Lemma}

\newtheorem{defn}[thm]{Definition}

\title{Exponential Inapproximability of Selecting a Maximum Volume
Sub-matrix}

\author{Ali {\c{C}}ivril\\
Meliksah University, Computer Engineering Department,\\
Talas, Kayseri 38280 Turkey \and
Malik Magdon-Ismail\\
Rensselaer Polytechnic Institute, Computer Science Department,\\
110 8th Street Troy, NY 12180 USA}

\begin{document}
\maketitle

\begin{abstract}
Given a matrix $A \in \mathbb{R}^{m \times n}$ ($n$ vectors in $m$
dimensions), and a positive integer $k < n$, we consider the
problem of selecting $k$ column vectors from $A$ such that the
volume of the parallelepiped they define is maximum over all
possible choices. We prove that there exists $\delta<1$ and $c>0$
such that this problem is not approximable within $2^{-ck}$ for $k
= \delta n$, unless $P=NP$.
\end{abstract}


\section{Introduction}
Given a matrix $A \in \mathbb{R}^{m \times n}$, it is of practical
importance to obtain the ``significant information'' contained in
$A$. It becomes especially important to have a compact
representation of $A$ when $A$ is large and has low numerical
rank, as is typical of modern data. Thus, in a broad sense, we are
interested in concise representations of matrices. Besides the
tremendous practical impact of linear algebraic algorithms
designed to this aim, they also come up in different theoretical
forms and paradigms. Specifically, the formalization of
``significant information'' can be done in several ways and to a
great extent, it depends on how a matrix is interpreted.

From a conceptual point of view, rather than interpreting a matrix
as a block of numbers, we view it as a set of vectors
(specifically, column vectors) which are indivisible entities.
Thus, the formalization of ``significant information'' is
essentially related to finding a subset of columns of the matrix
which satisfies some certain spectral conditions or orthogonality
requirements. From a purely combinatorial perspective, treating
vectors as elements of a set, one can also view subset selection
in matrices as a generalization of the usual subset selection
problem where the elements contain little or no information. To
give a specific example, the well known Set Cover problem asks for
a smallest cardinality subset of a set system which covers a
universal set. Likewise, the problem we are interested in
essentially asks for a small number of column vectors to ``cover''
the whole matrix. In this paper, we state a measure of quality for
this problem, namely the volume, and we prove an exponential
inapproximability result for the problem of selecting a maximum
volume sub-matrix of a matrix.

Several problems in matrix analysis require to construct a more
concise version of a matrix generally performed by a re-ordering
of the columns \cite{Golub}, such that the new smaller matrix is
as good a representative of the original as possible. One of the
criteria that defines the quality of a subset of columns of a
matrix is how well-conditioned the sub-matrix that they define is.
To motivate the discussion, consider the set of three vectors
$$ \textstyle \left\{ e_1= \left(
\begin{matrix}
1\\
0
\end{matrix}
\right), e_2= \left(
\begin{matrix}
0\\
1
\end{matrix}
\right), u= \left(
\begin{matrix}
\sqrt{1-\epsilon^2}\\
\epsilon
\end{matrix}
\right) \right\}, $$

\noindent which are clearly dependent, and any two of which are a
basis. Thus any pair can serve to reconstruct all vectors. Suppose
we choose $e_1, u$ as the basis, then
$e_2=(1/\epsilon)u-(\sqrt{1-\epsilon^2}/{\epsilon})e_1$, and we
have a numerical instability in this representation as
$\epsilon\rightarrow 0$. Such problems get more severe as the
dimensionality of the space gets large (curse of dimensionality),
and it is natural to ask the representatives to be ``as far away
from each other as possible''. From this simple example, we see
that two orthogonal vectors will capture more information about a
superset of columns than two that have an acute angle between each
other. Hence, in its generality, this vaguely stated problem can
be stated as finding a subset of columns with the maximum volume
possible or equivalently with the maximum determinant. A similar
(but not equivalent) problem is to find a subset with the maximum
smallest singular value. Indeed, in one of the early works
studying Rank Revealing QR (RRQR) factorizations \cite{Hong},
while discussing different options on how to choose a good
sub-matrix, it was noted that it turns out that ``the selection of
the sub-matrix with the maximum smallest singular value suggested
in \cite{Klema} can be replaced by the selection of a sub-matrix
with maximum determinant'', which heuristically proposes to
maximize the volume of the sub-matrix instead of using more
complicated functions. Several algorithms have been designed
following this intuition
\cite{Chan,Chan-Hansen,Chandrasekaran,Hoog,Gu,Hong,Pan-Tang}. The
optimization problem of finding a maximum volume sub-matrix of a
matrix was only recently studied by {\c{C}}ivril and Magdon-Ismail:

\begin{defn}\cite{Volume-TCS}
Given a matrix $A \in {\mathbb{R}}^{m \times n}$ of rank at least
$k$, MAX-VOL is the problem of finding a sub-matrix $C \in
\mathbb{R}^{m \times k}$ of $A$ such that the volume of the $k$
dimensional parallelepiped defined by the column vectors in $C$ is
maximum over all possible choices.
\end{defn}

\begin{thm}\cite{Volume-TCS}
MAX-VOL is NP-hard. Further, it is NP-hard to approximate to
within $2\sqrt{2}/3+\epsilon$ for arbitrarily small $\epsilon >
0$.
\end{thm}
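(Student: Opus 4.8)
The plan is to establish both claims at once via a single gap-producing reduction from \textsc{Exact Cover by 3-Sets} (X3C), which is NP-complete: given a ground set $U$ with $|U|=3t$ and a family $\mathcal{S}=\{S_1,\dots,S_n\}$ of $3$-element subsets of $U$, decide whether some subfamily of $\mathcal{S}$ partitions $U$. From such an instance I would build $A\in\mathbb{R}^{3t\times n}$ whose $i$-th column is the normalized indicator vector $a_i=\frac{1}{\sqrt{3}}\sum_{j\in S_i}e_j$, and set $k=t$. Then each $a_i$ is a unit vector and $\langle a_i,a_{i'}\rangle=|S_i\cap S_{i'}|/3\in\{0,\tfrac13,\tfrac23\}$ for $i\neq i'$ (discard duplicate sets first). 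One may assume $\bigcup_i S_i=U$, that $n>t$ (otherwise the instance is trivial, so $k<n$ as required), and --- as the definition of MAX-VOL demands --- that $\operatorname{rank} A\ge t$, which holds automatically on \textsc{yes}-instances and can otherwise be arranged by a harmless padding.

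For the analysis, recall that the volume of a sub-matrix $C$ of $A$ is $\sqrt{\det(C^{T}C)}$, where $C^{T}C$ is the Gram matrix of the chosen columns. \textbf{Completeness:} if $S_{i_1},\dots,S_{i_t}$ form an exact cover, the corresponding columns have pairwise disjoint supports, hence are orthonormal, so $C^{T}C=I_t$ and the volume equals $1$. \textbf{Soundness:} if no exact cover exists, take any $t$ columns; were their supporting $3$-sets pairwise disjoint, their union would be a $3t$-element subset of $U$, i.e.\ an exact cover, a contradiction. Hence two of the chosen columns, say the first two, satisfy $|\langle a,a'\rangle|=\rho\ge\tfrac13$. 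The Gram matrix $C^{T}C$ is positive semidefinite with unit diagonal, so by Fischer's inequality (splitting off the leading $2\times2$ block) together with Hadamard's inequality applied to the remaining unit-diagonal block, $\det(C^{T}C)\le(1-\rho^2)\cdot 1\le 1-\tfrac19=\tfrac89$. Thus every feasible $C$ has volume at most $\sqrt{8/9}=2\sqrt{2}/3$.

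Consequently, any polynomial-time algorithm approximating MAX-VOL within a factor strictly greater than $2\sqrt{2}/3$ would distinguish optimal value $1$ (\textsc{yes}) from optimal value $\le 2\sqrt{2}/3$ (\textsc{no}), hence decide X3C; this simultaneously proves that MAX-VOL is NP-hard and that it is NP-hard to approximate within $2\sqrt{2}/3+\epsilon$ for every $\epsilon>0$. I expect the soundness estimate to be the only step with real content: the point is that a single correlated pair, buried among otherwise arbitrary unit vectors, already forces the Gram determinant down to $8/9$, and Fischer's inequality is what makes this immediate rather than requiring a delicate global argument about overlapping $3$-set configurations. The NP-completeness of X3C, the elementary inner-product bookkeeping, and the passage from the $1$ versus $2\sqrt{2}/3$ gap to the stated inapproximability factor are all routine.
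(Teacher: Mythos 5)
This theorem appears in the present paper only as a citation to \cite{Volume-TCS}; there is no proof here to compare against. That said, your reduction is precisely the one used in the cited reference: X3C mapped to normalized $3$-set indicator columns $a_i=\frac{1}{\sqrt{3}}\mathbf{1}_{S_i}$ with $k=t$, completeness $\mathrm{Vol}=1$ from an exact cover giving $t$ orthonormal columns, and soundness from the pigeonhole observation that any $t$ columns failing to be an exact cover must contain two with inner product at least $1/3$, which caps the volume at $\sqrt{1-1/9}=2\sqrt{2}/3$. Your route to that last bound via Fischer's inequality (split off the correlated $2\times2$ Gram block) followed by Hadamard on the remaining unit-diagonal block is a clean linear-algebraic packaging of what \cite{Volume-TCS} obtains by the recursive projection formula $\mathrm{Vol}(C)=\prod_i\|a_i-\pi_{C_{<i}}(a_i)\|_2$ with the correlated pair placed first; the two are the same estimate. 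Your handling of the $\mathrm{rank}\ge k$ hypothesis is the only slightly informal spot --- rather than padding, it is cleaner to observe that if $\mathrm{rank}(A)<t$ then there are no $t$ pairwise-disjoint $3$-sets, so one may answer NO outright, and otherwise feed $A$ to the oracle --- but this is cosmetic. The argument is correct and essentially identical in approach to the source.
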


Since MAX-VOL is NP-hard, it is natural to ask for an algorithm to
approximate the maximum volume. The first thing one might try is a
simple greedy algorithm for approximating MAX-VOL:

\begin{algorithm}[htb]
\caption{Greedy}
\begin{algorithmic}[1]
\STATE $C \leftarrow \emptyset$ \WHILE{$|C| < k$} \STATE Select
the largest norm vector $v \in A$ \STATE Remove the projection of
$v$ from every element of $A$ \STATE $C \leftarrow C \cup v$
\ENDWHILE
\end{algorithmic}
\end{algorithm}

The analysis of the approximation ratio of this algorithm and a
lower bound was also provided in \cite{Volume-TCS}. Specifically,
let $Vol(Gr)$ be the volume of the column vectors chosen by Greedy
and let $Vol(Opt)$ be the optimum volume. Then, we have

\begin{thm}\cite{Volume-TCS}
$Vol(Gr) \geq \frac{1}{k!} \cdot Vol(Opt)$.
\end{thm}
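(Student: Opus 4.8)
The plan is to prove the bound by induction on $k$, using the recursive structure of Greedy. When $k=1$, Greedy picks the column of largest norm, which \emph{is} the optimal one-column submatrix, so $Vol(Gr)=Vol(Opt)$. For the inductive step, let $g_1$ be Greedy's first pick --- a longest column of $A$ --- and let $A'$ be $A$ with the component along $g_1$ removed from every column (line 4). Greedy's remaining $k-1$ picks on $A$ are precisely the picks of Greedy run on $(A',k-1)$, and since the volume of a parallelepiped is the leading edge length times the volume of the residuals, $Vol(Gr)=\|g_1\|\cdot Vol_{A'}(Gr_{k-1})$, where $Vol_{A'}(Gr_{k-1})$ is the volume of Greedy's output on $A'$. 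The inductive hypothesis applies to $A'$ (which has rank at least $k-1$) and gives $Vol_{A'}(Gr_{k-1})\ge\frac{1}{(k-1)!}Vol_{A'}(Opt_{k-1})$, where $Opt_{k-1}$ is a maximum-volume $(k-1)$-column submatrix of $A'$. So it suffices to exhibit a $(k-1)$-column submatrix of $A'$ of volume at least $\frac{1}{k\|g_1\|}Vol(Opt)$.

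To do this I would take an optimal $k$-column submatrix $[b_1\,|\,\cdots\,|\,b_k]$ of $A$, let $b_j'$ be the projection of $b_j$ onto $g_1^{\perp}$ (a column of $A'$), and delete one of the $b_j'$. The key inequality needed is
\[
Vol(b_1,\ldots,b_k)\;\le\;\|g_1\|\sum_{\ell=1}^{k}Vol\big(\{\,b_j':j\ne\ell\,\}\big).
\]
Granting it, the largest of the $k$ leave-one-out volumes on the right is at least $\frac{1}{k\|g_1\|}Vol(Opt)$; since each $\{b_j':j\ne\ell\}$ is a $(k-1)$-column submatrix of $A'$, this forces $Vol_{A'}(Opt_{k-1})\ge\frac{1}{k\|g_1\|}Vol(Opt)$, and combining with the inductive hypothesis yields $Vol(Gr)\ge\frac{1}{k!}Vol(Opt)$.

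To prove the displayed inequality, write $\hat g=g_1/\|g_1\|$ and $b_j=b_j'+\beta_j\hat g$ with $\beta_j=\langle b_j,\hat g\rangle$, noting $|\beta_j|\le\|b_j\|\le\|g_1\|$ because $g_1$ is a longest column of $A$. Expanding the $k$-vector $b_1\wedge\cdots\wedge b_k$ multilinearly (equivalently, using multilinearity of the signed $k$-volume), every term containing two or more copies of $\hat g$ vanishes, leaving $b_1'\wedge\cdots\wedge b_k'$ together with $k$ terms in which $\hat g$ replaces a single $b_\ell'$. The first summand uses only directions in $g_1^{\perp}$, while each remaining term has $\hat g$ as a factor, so the first is orthogonal (as a $k$-vector) to the sum of the rest; hence
\[
Vol(b_1,\ldots,b_k)^2=P^2+T^2,\qquad P=Vol(b_1',\ldots,b_k').
\]
Adjoining the unit vector $\hat g$, orthogonal to $g_1^{\perp}$, does not change a $(k-1)$-volume, so the $\ell$-th term has norm $|\beta_\ell|\,Vol(\{b_j':j\ne\ell\})$ and $T\le\sum_{\ell}|\beta_\ell|\,Vol(\{b_j':j\ne\ell\})$ by the triangle inequality. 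For $P$, expanding along $b_\ell'$ gives $P=Vol(\{b_j':j\ne\ell\})\cdot\mathrm{dist}(b_\ell',\mathrm{span}\{b_j':j\ne\ell\})\le Vol(\{b_j':j\ne\ell\})\,\|b_\ell'\|$ for every $\ell$, and $\|b_\ell'\|^2=\|b_\ell\|^2-\beta_\ell^2\le\|g_1\|^2-\beta_\ell^2$. With $x_\ell=Vol(\{b_j':j\ne\ell\})\ge0$ and $t_\ell=|\beta_\ell|/\|g_1\|\in[0,1]$, this says $P^2\le\|g_1\|^2\min_\ell x_\ell^2(1-t_\ell^2)$ and $T^2\le\|g_1\|^2(\sum_\ell t_\ell x_\ell)^2$, so the inequality reduces to the elementary claim below.

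The elementary claim is $\min_{\ell}x_\ell^2(1-t_\ell^2)+\big(\sum_{\ell}t_\ell x_\ell\big)^{2}\le\big(\sum_{\ell}x_\ell\big)^{2}$; picking $m$ to attain the minimum, this follows from $x_m^2(1-t_m^2)=\big((1-t_m)x_m\big)\big((1+t_m)x_m\big)\le\big(\sum_\ell(1-t_\ell)x_\ell\big)\big(\sum_\ell(1+t_\ell)x_\ell\big)=(\sum_\ell x_\ell)^2-(\sum_\ell t_\ell x_\ell)^2$. I expect the one genuine obstacle to be precisely this bookkeeping: bounding $P$ and $T$ independently by the triangle inequality loses a factor of $\sqrt2$ at each level of the recursion and would only give $\sqrt2^{\,k}/k!$, so the clean $1/k!$ requires combining the two contributions through the claim above. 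The potentially degenerate cases --- $Vol(Opt)=0$, $\|g_1\|=0$, or some $b_j'$ vanishing --- either cannot occur (since $A$ has rank at least $k$) or are absorbed harmlessly by the maximum over $\ell$.
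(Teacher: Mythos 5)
The paper does not prove this theorem: it is quoted from \cite{Volume-TCS}, so there is no in-paper argument to compare against. Judged on its own, your proof is correct. The inductive framing ($Vol(Gr)=\|g_1\|\cdot Vol_{A'}(Gr_{k-1})$ and induction on $A'$, which has rank $\geq k-1$) is the natural one, and the whole burden is carried by the leave-one-out inequality $Vol(b_1,\ldots,b_k)\le\|g_1\|\sum_{\ell}Vol(\{b_j':j\ne\ell\})$. Your derivation of it is sound: after writing $b_j=b_j'+\beta_j\hat g$, the multilinear expansion of $b_1\wedge\cdots\wedge b_k$ has no terms with two $\hat g$ factors, the $\hat g$-free term $b_1'\wedge\cdots\wedge b_k'$ is genuinely orthogonal in $\Lambda^k$ to the $\hat g$-containing terms (the Gram matrix acquires a zero column), and adjoining the unit vector $\hat g\perp g_1^\perp$ preserves $(k-1)$-volumes. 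The bounds $P\le\|g_1\|\,x_\ell\sqrt{1-t_\ell^2}$ for every $\ell$ and $T\le\|g_1\|\sum_\ell t_\ell x_\ell$ are correct, and the elementary inequality $\min_\ell x_\ell^2(1-t_\ell^2)+(\sum_\ell t_\ell x_\ell)^2\le(\sum_\ell x_\ell)^2$ (proved via $(1-t_m)x_m\cdot(1+t_m)x_m\le\sum_\ell(1-t_\ell)x_\ell\cdot\sum_\ell(1+t_\ell)x_\ell$) closes the argument without the wasteful $\sqrt{2}$ you correctly flag as the pitfall of bounding $P$ and $T$ separately and adding. The degenerate cases are handled as you say: rank $\geq k$ forces $Vol(Opt)>0$ and $\|g_1\|>0$, and a vanishing $b_j'$ only zeroes $P$ and the corresponding $x_\ell(1-t_\ell^2)$ term, leaving the inequality intact.
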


\begin{thm}\cite{Volume-TCS}
There exists an instance of MAX-VOL for which $Vol(Gr) \leq
\frac{1}{2^{k-1}} (1-\epsilon) \cdot Vol(Opt)$ for arbitrarily
small $\epsilon >0$. Furthermore, this instance can explicitly be
constructed.
\end{thm}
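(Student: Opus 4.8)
The plan is to exhibit, for each integer $k$ and each small slack parameter $\mu>0$, an \emph{explicit} matrix $A=A(k,\mu)$ on which Greedy performs as badly as claimed, and then let $\mu\to 0$. The matrix has two kinds of columns: an ``optimal block'' $O=\{o_1,\dots,o_k\}$, intended to be the maximum-volume $k$-subset, with $Vol(O)=V^{\star}$; and a ``decoy block'' $D=\{d_1,\dots,d_k\}$ (or $k-1$ decoys), where each $d_i$ is made just barely longer than every $o_j$ (using the slack $\mu$) so that Greedy is forced to begin inside $D$. The core of the argument is a \emph{trajectory lemma}: Greedy's first picks are exactly $d_1,d_2,\dots$ in this order, and each pick, once it has been projected out, leaves a residual instance that looks (up to scaling) like the same construction with $k$ decremented, so the behaviour can be analysed by induction on $k$.

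The building block is the case $k=2$. Take $o_1,o_2$ to be unit vectors spanning a small angle $2\theta$, and let $d_1=(1+\mu)\,\frac{o_1+o_2}{\|o_1+o_2\|}$ point along their bisector, so $\|d_1\|=1+\mu>1=\|o_i\|$. One checks directly that Greedy takes $d_1$ first and then whichever of $o_1,o_2$ has the larger residual, so $Vol(Gr)=(1+\mu)\sin\theta$, whereas $Vol(Opt)=Vol(\{o_1,o_2\})=\sin 2\theta$; the ratio is $\tfrac{1+\mu}{2\cos\theta}$, which approaches $\tfrac12$ as $\theta,\mu\to 0$. The plan for general $k$ is to chain $k-1$ such ``loss-of-a-factor-$2$'' gadgets along a nested sequence of directions (rather than as an orthogonal direct sum, which merely wastes dimensions and would yield the wrong base in the exponent): the bisector direction of the top gadget is itself split by the next gadget one dimension down, and so on, the $o_j$ being deeply nested small rotations along a fixed orthonormal chain $f_1,\dots,f_k$, and $d_i$ pointing (almost) along $f_i$ with norm $1+\mu_i$ where $\mu_1>\mu_2>\cdots$. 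I would maintain an explicit inductive invariant for the residual configuration after $i$ Greedy steps --- ``a rescaled copy of the $(k-i)$-dimensional instance, together with the already-collapsed optimal vectors'' --- to keep the bookkeeping under control.

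The key lemma, and the main obstacle, is the step-by-step domination claim: after Greedy has removed $d_1,\dots,d_i$, the residual of $d_{i+1}$ is \emph{strictly} the largest among all surviving columns, in particular strictly larger than the residual of every $o_j$. This is a genuine balancing act. For $Vol(Gr)=\prod_{i}\|d_i^{\,\mathrm{res}}\|$ to be exponentially below $Vol(Opt)$, the decoy residuals must shrink geometrically; this is only possible if projecting out the decoys simultaneously collapses the optimal vectors' residuals, so the decoy directions and the optimal vectors must overlap heavily. Yet at \emph{every} step the collapse of the $o_j$'s must lag slightly behind that of the next decoy, or Greedy escapes the trap and selects an optimal vector --- which, as the $k=2$ analysis already shows, is exactly what must be prevented if we want a ratio below $1$. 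I expect this to reduce, at each level of the recursion, to a one-parameter inequality in the small angle $\theta$ (with $\mu$ entering only through lower-order terms); one shows it holds for $\theta$ below an absolute threshold, uniformly over all $k-1$ levels, so a single small $\theta=\theta(\mu)$ handles the whole chain.

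Granting the trajectory lemma, the rest is arithmetic. From the inductive description one reads off $Vol(Gr)=\prod_{i=1}^{k}\|d_i^{\,\mathrm{res}}\|$ and checks that $Vol(Gr)/V^{\star}$ equals $\tfrac{1}{2^{k-1}}$ times a factor tending to $1$ as $\mu,\theta\to 0$; hence, given any target $\epsilon>0$, choosing $\mu$ and $\theta=\theta(\mu)$ small enough yields $Vol(Gr)\le\tfrac{1}{2^{k-1}}(1-\epsilon)\,Vol(Opt)$. It then remains to confirm that $O$ really is the maximum-volume $k$-subset --- that no $k$-subset mixing decoys with optimal vectors, nor the all-decoy subset, has volume exceeding $V^{\star}$; because the decoys are nearly aligned with the chain $f_1,\dots,f_k$ while $O$ exploits the transverse ``differences,'' this is a finite case analysis that again reduces to the same small-angle estimates. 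Finally, since all entries are explicit trigonometric (hence algebraic) functions of $\mu$ and $\theta$, and may be rationalised, the instance is explicitly constructible, as asserted.
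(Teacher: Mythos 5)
This theorem is only \emph{cited} here from \cite{Volume-TCS}; the present paper contains no proof of it, so there is no internal argument to compare your attempt against.

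Taken on its own terms, your sketch is plausible but unfinished in exactly the places you flag yourself. The $k=2$ gadget is correct: with $o_1,o_2$ at angle $2\theta$ and $d_1=(1+\mu)$ along the bisector, Greedy returns $(1+\mu)\sin\theta$ against $Vol(\{o_1,o_2\})=\sin 2\theta$, and the ratio $\tfrac{1+\mu}{2\cos\theta}\to\tfrac12$. For general $k$ you describe the intended shape of the construction (nested small rotations along an orthonormal chain, with decoys of slowly decreasing slack $\mu_1>\mu_2>\cdots$) but never pin it down: the ``trajectory lemma,'' the precise form of the $o_j$, and the verification that $\{o_1,\dots,o_k\}$ really is the maximum-volume $k$-subset are all stated as intentions rather than carried out. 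The last item is genuinely delicate, since making every $o_j$ nearly parallel to a single direction (which is how the decoy residuals keep shrinking) threatens to make $Vol(\{o_1,\dots,o_k\})$ vanish faster than the all-decoy volume; your own description of $d_i$ ``pointing almost along $f_i$'' would give nearly orthogonal decoys, and one has to check carefully that the decoy mixes don't then overtake $O$. A correct nesting does exist --- e.g.\ for $k=3$ one can take $o_1,o_2,o_3$ all at angle $\theta$ from $e_1$ with $o_3$'s transverse component rotated by $\rho$ out of the $(e_1,e_2)$ plane, $d_1=(1+\mu_1)e_1$, and $d_2$ on the residual bisector, yielding a ratio tending to $1/4$ --- but demonstrating this and extending it inductively is the entire content of the theorem, and your proposal does not supply it. So: right idea, right base case, but the argument for general $k$ is a plan rather than a proof.
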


Note that there is a gap between the proven approximation ratio
and the lower bound implied by the explicit example. The analysis
yielding the ratio $1/k!$ is essentially a product of $k$
different mutually exclusive analyses related to each step of the
algorithm. However, it is not clear whether the overall
contribution of these different steps to the approximation ratio
is actually better than their products. Indeed, the lower bound of
$1/2^{k-1}$ pertains to such a peculiar construction that we have
conjectured a $1/2^{k-1}$ approximation ratio for the greedy
algorithm.  Hence, in general, proving an exponential
inapproximability for this problem is an important step towards
characterizing its approximability properties. It will show that
the greedy algorithm is almost the best one can hope for.

This work takes a first step towards this goal and prove
exponential inapproximability for MAX-VOL via a gap preserving
reduction from the well known Label-Cover problem using the
Parallel Repetition Theorem \cite{Raz}. In doing so, we will
establish that the greedy algorithm is asymptotically optimal up
to a logarithm in the exponent. Specifically, we prove the
following theorem:

\begin{thm}
There exists $\delta<1$ and $c>0$ such that the problem MAX-VOL is
not approximable within $2^{-ck}$ for $k = \delta n$, unless
$P=NP$.
\end{thm}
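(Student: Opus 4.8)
The plan is to construct a gap-preserving reduction from Label-Cover, whose hardness is amplified by the Parallel Repetition Theorem. Recall that in (the gap version of) Label-Cover we are given a bipartite graph with vertex sets $U$ and $W$, label sets $[L]$ and $[R]$, and a projection constraint $\pi_e \colon [L] \to [R]$ on each edge $e$; the promise is that either there is a labeling satisfying every edge, or no labeling satisfies more than a $\gamma$ fraction of the edges. By Raz's theorem, applying $t$-fold parallel repetition drives the soundness error down to $\gamma^{\Omega(t)}$ while blowing up the instance size by a factor exponential in $t$; choosing $t = \Theta(\log(\text{instance size}))$ will let us obtain a soundness gap that is exponentially small in the number of columns of the matrix we build, which is exactly what an inapproximability ratio of $2^{-ck}$ requires.

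The heart of the argument is the encoding of a Label-Cover instance as a MAX-VOL instance. I would associate to each vertex--label pair a block of (near-)orthonormal vectors living in disjoint coordinate subspaces, so that the volume of a selected sub-matrix factors (multiplicatively, by the block structure) into a product of local contributions, one per vertex. Within each block, "consistent" choices of columns — those corresponding to a single label at a vertex, and labels that project consistently across each edge — should be mutually nearly orthogonal (contributing a factor close to $1$ per dimension), whereas "inconsistent" selections force two vectors to lie at a small angle, contributing a factor bounded away from $1$ (as in the motivating $\{e_1,e_2,u\}$ example in the introduction). With $k=\delta n$ chosen so that a selection of $k$ columns is forced to touch essentially every vertex block, a YES instance of Label-Cover yields a selection whose volume is (essentially) maximal, while in a NO instance any selection of $k$ columns must be inconsistent on a constant fraction of the edges, each such inconsistency multiplying the volume by a constant factor strictly less than $1$; composed over $\Omega(t)$ unsatisfied edges this gives a volume smaller by $2^{-\Omega(t)} = 2^{-\Omega(k)}$.

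Concretely, the steps in order are: (i) state the gap version of Label-Cover and record the parameters obtained from Parallel Repetition; (ii) describe the vectors of the MAX-VOL instance $A$, grouped into per-vertex blocks in mutually orthogonal coordinate subspaces, together with the choice of $k$ and of the small perturbation parameter $\epsilon$ governing the angle between inconsistent vectors; (iii) completeness: from a satisfying labeling extract $k$ columns and lower bound their volume by (some explicit constant)$^{k}$; (iv) soundness: show that any $k$ columns induce a labeling, that $\Omega(t)$ edges are violated, and that each violation costs a multiplicative factor $\le 1-\Omega(\epsilon^2)$ in volume, so $Vol \le 2^{-\Omega(t)}\cdot(\text{completeness value})$; (v) check that the number of columns $n$ is polynomial in the Label-Cover size and that $k = \delta n$ for a fixed $\delta < 1$, then set $c$ so the ratio is $2^{-ck}$, and conclude that a $2^{-ck}$-approximation would decide Label-Cover, hence $P=NP$.

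The main obstacle I anticipate is the soundness analysis: volume is a global, multiplicative quantity, so I must ensure the block construction genuinely decouples the edges, i.e. that a single "cheating" column cannot simultaneously repair many edges, and that the perturbations used to penalize inconsistency do not accumulate across the (many) consistent coordinates into a factor that itself decays like $2^{-\Omega(k)}$ and swamps the gap. Getting the two scales right — a per-inconsistency penalty that is a fixed constant $<1$, while the total "background" loss from $\epsilon$-perturbations over all $\sim k$ dimensions stays $2^{-o(k)}$ (ideally $1-o(1)$) — is the delicate quantitative core, and it is what forces the logarithmic loss in the exponent relative to the conjectured $2^{-(k-1)}$ bound for Greedy.
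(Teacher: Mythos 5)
Your high-level plan — a gap-preserving reduction from Label Cover amplified by Parallel Repetition, with per-vertex blocks in mutually orthogonal coordinate subspaces so that inconsistencies incur multiplicative volume penalties — is indeed the approach the paper takes, and your "main obstacle" paragraph correctly flags that a selection could concentrate on few vertices and that the bookkeeping of penalties is the delicate point. However, there is a genuine error in your choice of the repetition parameter. You propose taking $t = \Theta(\log(\text{instance size}))$ repetitions and then credit the $2^{-ck}$ soundness to the parallel-repetition error $\gamma^{\Omega(t)}$. This cannot work for two reasons. First, $t$-fold repetition blows the instance size up to $N^{\Theta(t)}$, so logarithmic $t$ gives a quasi-polynomial, not polynomial, reduction and you would no longer be able to conclude under the assumption $P \neq NP$. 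Second, and more fundamentally, the parallel-repetition soundness $2^{-\Omega(t)}$ is nowhere near $2^{-ck}$: the number of columns $k$ of the constructed matrix is itself $N^{\Theta(t)}$, exponentially larger than $t$, so equating $2^{-\Omega(t)}$ with $2^{-\Omega(k)}$ as you do ("composed over $\Omega(t)$ unsatisfied edges ... $2^{-\Omega(t)} = 2^{-\Omega(k)}$") is off by an exponential. The paper instead fixes $\ell$ to be a \emph{constant} (depending only on the constant $\alpha$ in Raz's theorem, just large enough to push the Label-Cover soundness below an explicit threshold like $11/27$), so the reduction is genuinely polynomial time, and the $2^{-ck}$ factor arises entirely from the construction: there are $\Theta(k)$ selected vertices, each of whose vector sits at distance $\bigl(1-\Theta(1/5^{\ell})\bigr)^{1/2}$ from the span of the other side (because of the unsatisfied edges incident to it), and multiplying $\Theta(k)$ factors each bounded by a constant $<1$ gives $2^{-\Omega(k/5^{\ell})} = 2^{-ck}$.

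Two further ingredients you would need to make the argument go through are missing from the sketch. One is the explicit gadget: the paper uses rows of a Sylvester Hadamard matrix (with $-1$'s replaced by $0$'s) to get a family of $2^{\ell}$ unit vectors with the precise property that $b_i \cdot b_j = 1/2$ and $b_i \cdot \overline{b_j} = 1/2$ for $i \neq j$; the $V$-side columns carry $\overline{b_{\Pi_e(i)}}$ in the block for edge $e$ and the $W$-side columns carry $b_j$, so that an edge is "orthogonalized" exactly when the projection constraint is satisfied — this is what makes completeness exactly $1$ and gives a clean, uniform inner product on unsatisfied edges. The other is the intermediate lemma that a selection with volume $\geq 2^{-ck}$ cannot have too many "duplicate" columns from a single vertex (since duplicates have pairwise inner product $1/2$, forcing an exponential drop by your own multiplicative reasoning), hence must be nearly balanced and must touch almost all vertices; only after establishing this concentration can one count the $\Theta((5n)^{\ell})$ unsatisfied edges between the touched vertex sets and convert them into the $2^{-ck}$ bound. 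As written, your proposal gestures at this ("forced to touch essentially every vertex block") but treats it as automatic, when in fact it is a separate claim that must be proved and is where the constant $c = 1/(3\cdot 5^{\ell+1})$ is ultimately determined.
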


Our reduction may also be of independent interest which can be
used to prove inapproximability results for other matrix
approximation problems with different objective functions.

\subsection{Preliminaries and Notation}
We introduce some preliminary notation and definitions. Let a
matrix $A$ be given in column notation as: $A = \{v_1, v_2,
\hdots, v_n\}$. The volume of $A$, $Vol(A)$ can be recursively
defined as follows: if A contains one column, i.e. $A = \{v_1\}$,
then $Vol(A) = {\|v\|}_2$, where ${\| \cdot \|}_2$ is the
Euclidean norm. If A has more than one column, $ Vol(A) =
{\|v-\pi_{(A-\{v\})}(v)\|}_2\cdot Vol(A-\{v\})$ for any $v \in A$,
where $\pi_{A}(v)$ is the projection of $v$ onto the space spanned
by the column vectors of $A$. It is well known that
$\pi_{(A-\{v\})}(v)=A_v A_v^+v$, where $A_v$ is the matrix whose
columns are the vectors in $A-\{v\}$, and $A_v^+$ is the
pseudo-inverse of $A_v$ (see for example \cite{Golub}). Using this
recursive expression, we have

\begin{equation*} \label{volume-inapprox} Vol(S) = Vol(A) = {\|v_1\|}_2 \cdot \prod_{i=1}^{n-1}
{\|v_{i+1}-A_i A_i^+v_{i+1}\|}_2
\end{equation*}

\noindent where $A_i = \{v_1 \cdots v_i\}$ for $ \leq i \leq n-1$.

We observe a simple fact about the ``distance'' of a vector to a
subspace in the following lemma, which will be useful in the final
proof. Given two sets of vectors $P$ and $Q = \{q_1, \hdots ,
q_m\}$, let $d(q,P) = {\|q-\pi_P(q)\|}_2$ denote the distance of
$q \in Q$ to the space spanned by the vectors in $P$.

\begin{lem}[Union Lemma]
\label{union} $Vol(P \cup Q) \leq Vol(P) \cdot \prod_{i=1}^n
d(q_i, P)$.
\end{lem}

\begin{proof}
We argue by induction on $m$. For $m=1$, $Q$ has one element and
the statement trivially holds. Assume that it is true for $n=k$
where $Q = \{q_1, \hdots ,q_k\}$. Then, for any $q_{k+1}$

\begin{align*}
Vol(P \cup Q \cup \{q_{k+1}\}) &= Vol(P \cup Q) \cdot d(q_{k+1}, P \cup Q) \\
&\leq_{(a)} Vol(P \cup Q) \cdot d(q_{k+1}, P) \\
&\leq_{(b)} Vol(P) \cdot \prod_{i=1}^k d(q_i, P) \cdot d(q_{k+1},P) \\
&= Vol(P) \cdot \prod_{i=1}^{k+1} d(q_i,P).
\end{align*}

\noindent (a) follows because $d(q, A \cup B) \leq d(q, A)$ for
any $A$, $B$ and (b) follows by the induction hypothesis.
\end{proof}
\subsection{Related Work}
The concept of volume has been closely related to matrix
approximation and mainly studied from a linear algebraic
perspective. There are a few results revealing the relationship
between the volume of a subset of columns of a matrix and its
approximation. In \cite{Vempala}, the authors introduced
\emph{volume sampling} to find low-rank approximation to a matrix
where one picks a subset of columns with probability proportional
to the volume of the simplex they define. In volume sampling, one
picks a subset of columns $S$ of size $k$ with probability

$$
P_S = \frac{Vol(S)^2}{\sum_{T:|T|=k} Vol(T)^2},
$$

\noindent where the summation in the denominator is over all
subsets of size $k$. This sampling provides an almost tight
low-rank approximation of a matrix in Frobenius norm. Improving
this existence result, Deshpande and Vempala
\cite{Deshpande-Vempala} provided an adaptive randomized algorithm
for the low-rank approximation problem, which includes a
sub-procedure that repetitively chooses a small number of columns
by approximating volume sampling. This algorithm is essentially a
greedy algorithm and can be regarded as a randomized version of
the greedy algorithm we have analyzed for MAX-VOL
\cite{Volume-TCS}. They show that, if $\tilde{P_S}$ is the
probability that this algorithm chooses a subset of columns $S$ of
size $k$, then

\begin{equation}
\tilde{P_S} \leq k! \cdot P_S. \label{ps}
\end{equation}

\noindent Thus, not only is sampling larger volume columns good,
but approximately sampling columns with large volume can prove
useful for matrix approximation. A natural question is to ask what
happens when one finds a set of columns with the largest volume
(deterministic), which is our problem MAX-VOL. Note that, the last
expression (\ref{ps}) is reminiscent of the approximation ratio we
have proved for MAX-VOL in \cite{Volume-TCS}, but its analysis
relies on a linear algebraic identity whereas the result in
\cite{Volume-TCS} is derived via combinatorial means. MAX-VOL and
volume sampling seem to be related, but they have different
characteristics. MAX-VOL is proven to be intractable by using
complexity theoretic tools, whereas according to a recent result
by Deshpande and Rademacher \cite{Desh-Rade}, volume sampling can
be exactly implemented in polynomial time. This work together with
\cite{Desh-Rade} reveals the fact that, although one can exactly
sample the columns of a matrix with probability proportional to
their volumes, identifying a subset with the maximum volume is
hard.

Goreinov and Tyrtyshnikov \cite{Tyr} provided explicit statements
of how MAX-VOL, in particular, is related to low-rank
approximations in the following theorem:

\begin{thm} {\rm \cite{Tyr}}
\label{kk} Suppose that $A$ is an $m \times n$ block matrix of the
form

$$
A = \left( \begin{array}{cc}
A_{11} & A_{12} \\
A_{21} & A_{22}\\
\end{array}
\right)
$$

\noindent where $A_{11}$ is nonsingular, $k \times k$, whose
volume is at least $\mu^{-1}$ times the maximum volume among all
$k \times k$ sub-matrices. Then \footnote{$\|B\|_{\infty}$ denotes
the maximum modulus of the entries of a matrix $B$.} $\|A_{22} -
A_{21} A_{11}^{-1} A_{12}\|_{\infty} \leq \mu (k+1)
\sigma_{k+1}(A)$.
\end{thm}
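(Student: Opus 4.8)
The plan is to prove the stronger \emph{entrywise} estimate $|(A_{22}-A_{21}A_{11}^{-1}A_{12})_{ij}|\le\mu(k+1)\,\sigma_{k+1}(A)$ for every index pair $(i,j)$, since this is precisely what $\|\cdot\|_\infty$ measures. Fix such a pair, with $i$ ranging over the rows indexing $A_{21},A_{22}$ and $j$ over the columns indexing $A_{12},A_{22}$, and let $B$ be the $(k+1)\times(k+1)$ submatrix of $A$ whose rows are the $k$ rows indexing $A_{11}$ together with row $i$, and whose columns are the $k$ columns indexing $A_{11}$ together with column $j$. Then $B$ has $A_{11}$ as its leading $k\times k$ block, the relevant piece of the $j$th column of $A_{12}$ as its last column, the relevant piece of the $i$th row of $A_{21}$ as its last row, and $(A_{22})_{ij}$ in the bottom-right corner, so the Schur-complement formula for a bordered determinant gives $\det B=\det(A_{11})\cdot(A_{22}-A_{21}A_{11}^{-1}A_{12})_{ij}$. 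Since $A_{11}$ is nonsingular, $|(A_{22}-A_{21}A_{11}^{-1}A_{12})_{ij}|=|\det B|/|\det A_{11}|$. Now let $V=\max\{|\det M|:M\text{ a }k\times k\text{ submatrix of }A\}$; then $V>0$ (as $A_{11}$ is nonsingular) and, by hypothesis, $|\det A_{11}|\ge\mu^{-1}V$, so the quantity we want to bound is at most $\mu\,|\det B|/V$. It therefore suffices to show $|\det B|\le(k+1)\,\sigma_{k+1}(A)\,V$.

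To bound $|\det B|$, factor it through the singular values of $B$: $|\det B|=\prod_{l=1}^{k+1}\sigma_l(B)=\sigma_{k+1}(B)\cdot\prod_{l=1}^{k}\sigma_l(B)$. For the last factor $\sigma_{k+1}(B)$: writing $B=PAQ$ with $P,Q$ coordinate-selection matrices (contractions in spectral norm), we get $\sigma_{k+1}(B)\le\|P\|_2\,\sigma_{k+1}(A)\,\|Q\|_2=\sigma_{k+1}(A)$. For the product of the top $k$ singular values, recall that $\prod_{l=1}^{k}\sigma_l(B)=\|C_k(B)\|_2\le\|C_k(B)\|_F$, where $C_k(B)$ is the $k$th compound matrix of $B$, i.e.\ the $(k+1)\times(k+1)$ matrix whose entries are the $k\times k$ minors of $B$; hence $\prod_{l=1}^{k}\sigma_l(B)\le\big(\sum_{N}(\det N)^2\big)^{1/2}$, with $N$ ranging over the $\binom{k+1}{k}^2=(k+1)^2$ many $k\times k$ minors of $B$. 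Each such $N$ is itself a $k\times k$ submatrix of $A$, so $|\det N|\le V$, giving $\prod_{l=1}^{k}\sigma_l(B)\le(k+1)V$. Combining, $|\det B|\le(k+1)\,\sigma_{k+1}(A)\,V$, which completes the reduction; substituting back yields $|(A_{22}-A_{21}A_{11}^{-1}A_{12})_{ij}|\le\mu(k+1)\sigma_{k+1}(A)$, and taking the maximum over $(i,j)$ proves the theorem. (If $\operatorname{rank}A\le k$ then $\sigma_{k+1}(A)=0$, every $(k+1)\times(k+1)$ minor of $A$ vanishes so $\det B=0$, and the Schur complement is identically $0$, consistent with the bound.)

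The only step that requires a genuine idea is the inequality $\prod_{l=1}^{k}\sigma_l(B)\le(k+1)V$: one must resist bounding $|\det B|$ either by a crude product of singular values of $A$ or by the triangle inequality on matrix entries, and instead express everything in terms of $k\times k$ subdeterminants of $A$, since that is the only form in which the maximum-volume hypothesis on $A_{11}$ can be brought to bear. Pairing this with singular-value monotonicity under restriction for the single leftover factor is exactly what makes $\sigma_{k+1}(A)$ — rather than $\sigma_1(A)$ — appear in the final bound; the rest is the elementary determinant bookkeeping of the first paragraph.
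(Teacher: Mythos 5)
Your argument is correct, and it is worth noting that the paper itself offers no proof of this statement: Theorem~\ref{kk} is quoted as background from Goreinov and Tyrtyshnikov \cite{Tyr}, so there is no in-paper derivation to measure you against. Your reduction to the bordered $(k+1)\times(k+1)$ submatrix $B$ via $\det B=\det(A_{11})\,(A_{22}-A_{21}A_{11}^{-1}A_{12})_{ij}$ is exactly the right first move, and the two key estimates check out: $\sigma_{k+1}(B)\le\sigma_{k+1}(A)$ by singular-value monotonicity under row/column selection, and $\prod_{l=1}^{k}\sigma_l(B)=\|C_k(B)\|_2\le\|C_k(B)\|_F\le(k+1)V$ since the $(k+1)^2$ entries of the $k$th compound matrix are $k\times k$ minors of $A$, each of modulus at most $V\le\mu\,|\det A_{11}|$. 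Where you differ from the standard argument in \cite{Tyr} is only in how the factor $(k+1)$ is produced: the original proof works with $B^{-1}$, whose entries are (up to sign) $k\times k$ minors of $B$ divided by $\det B$, and uses $\sigma_{k+1}(B)^{-1}=\|B^{-1}\|_2\le(k+1)\max_{i,j}|(B^{-1})_{ij}|\le(k+1)\mu\,|\det A_{11}|/|\det B|$, which rearranges to the same bound $|\det B|/|\det A_{11}|\le\mu(k+1)\sigma_{k+1}(B)$. Your compound-matrix route and their inverse-matrix route are two packagings of the same determinant bookkeeping and yield identical constants; yours avoids invoking $B^{-1}$ (and hence needs no separate discussion of whether $B$ is invertible, the singular case being absorbed by $\sigma_{k+1}(B)=0$), while theirs avoids compound matrices. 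Either way the statement is established, and the maximum-volume hypothesis enters precisely where you placed it.
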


This theorem implies that if one has a good approximation to the
maximum volume $k \times k$ sub-matrix, then the rows and columns
corresponding to this sub-matrix can be used to obtain a good
approximation to the entire matrix in the $\infty$-norm. If
$\sigma_{k+1}(A)$ is small for some small $k$, then this yields a
low-rank approximation to $A$. \cite{Tyr2} also proves a similar
result to Theorem \ref{kk}.

Pan \cite{Pan} unifies the main approaches developed for finding
RRQR factorizations by defining the concept of \emph{local maximum
volume} and then gives a theorem relating it to the quality of
approximation.

\begin{defn} {\rm \cite{Pan}}
Let $A \in \mathbb{R}^{m \times n}$ and $C$ be a sub-matrix of $A$
formed by any $k$ columns of $A$. $Vol(C)(\neq 0)$ is said to be
local $\mu$-maximum volume in $A$, if $\mu \, Vol(C) \geq Vol(C')$
for any $C'$ that is obtained by replacing one column of $C$ by a
column of $A$ which is not in $C$.
\end{defn}

\begin{thm} {\rm \cite{Pan}}
\label{local} For a matrix $A \in \mathbb{R}^{n \times n}$, an
integer $k$ $(1 \leq k < n)$ and $\mu \geq 1$, let $\Pi \in
\mathbb{R}^{n \times n}$ be a permutation matrix such that the
first $k$ columns of $A\Pi$ is a local $\mu$-maximum in $A$. Then,
for the QR factorization

$$
A \Pi = Q \left( \begin{array}{cc}
R_{11} & R_{12} \\
0 & R_{22} \\
\end{array}
\right),
$$

\noindent we have $\sigma_{min}(R_{11}) \geq (1/\sqrt{k(n-k) \,
\mu^2+1}) \sigma_k(A)$ and $\sigma_1(R_{22}) \leq \sqrt{k(n-k) \,
\mu^2+1} \, \sigma_{k+1}(A)$.
\end{thm}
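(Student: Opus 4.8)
The plan is to translate the combinatorial local-$\mu$-maximum hypothesis into two concrete bounds on the blocks of $R$, and then convert those into the two singular-value inequalities via the Courant--Fischer min-max characterization together with the orthogonal invariance $\sigma_i(A)=\sigma_i(R)$. Since the first $k$ columns of $A\Pi$ form a local $\mu$-maximum, their volume is nonzero, so $R_{11}$ is invertible; set $G=R_{11}^{-1}R_{12}$ and $\tau=\sqrt{k(n-k)\mu^2+1}$. Because $Q$ is orthogonal and the lower-left block of $R$ vanishes, $Vol$ of the first $k$ columns of $A\Pi$ equals $|\det R_{11}|$; more generally, replacing the $i$-th of these columns ($1\le i\le k$) by the $(k+j)$-th column of $A\Pi$ yields a $k$-subset whose squared volume, divided by $\det(R_{11})^2$, equals — by Cauchy--Binet, using Cramer's rule for the all-top-rows term and a single cofactor expansion for each remaining rank-one contribution — exactly
$$
G_{ij}^2+\|(R_{22})_{\cdot j}\|_2^{\,2}\,\|e_i^{\top}R_{11}^{-1}\|_2^{\,2}.
$$
The local $\mu$-maximum property says this quantity is at most $\mu^2$ for every $i\in\{1,\dots,k\}$ and $j\in\{1,\dots,n-k\}$.

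Summing this inequality over all $i,j$ — and here it is essential that the \emph{sum} of the two terms, not merely each term, is bounded by $\mu^2$ — gives $\|G\|_F^2+\|R_{11}^{-1}\|_F^2\,\|R_{22}\|_F^2\le k(n-k)\mu^2$, whence, passing from Frobenius to spectral norms,
$$
\alpha^2+\beta^2\le\tau^2-1,\qquad\text{where }\ \alpha:=\|G\|_2,\ \ \beta:=\frac{\sigma_1(R_{22})}{\sigma_{\min}(R_{11})}.
$$

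To bound $\sigma_{\min}(R_{11})$ from below, I would apply $\sigma_k(R)=\min_{\dim W=n-k+1}\max_{0\ne w\in W}\|Rw\|/\|w\|$ to the $(n-k+1)$-dimensional space $W$ spanned by the vectors $\binom{-Ge_j}{e_j}$, $j=1,\dots,n-k$, together with $\binom{z}{0}$, where $z$ is a unit right singular vector of $R_{11}$ for $\sigma_{\min}(R_{11})$. Since $R_{11}G=R_{12}$, every $w=\binom{sz-Gt}{t}\in W$ has $Rw=\binom{sR_{11}z}{R_{22}t}$, so $\|Rw\|^2\le\sigma_{\min}(R_{11})^2\bigl(s^2+\beta^2\|t\|^2\bigr)$ (using $\sigma_1(R_{22})=\beta\,\sigma_{\min}(R_{11})$) while $\|w\|^2=\|sz-Gt\|^2+\|t\|^2$; feeding in the elementary estimate $\|sz-Gt\|^2\ge(1-\epsilon)s^2+(1-\epsilon^{-1})\alpha^2\|t\|^2$ with $\epsilon=1-\tau^{-2}$ and invoking $\alpha^2+\beta^2\le\tau^2-1$ gives $\|Rw\|\le\tau\,\sigma_{\min}(R_{11})\|w\|$ on all of $W$, hence $\sigma_k(A)=\sigma_k(R)\le\tau\,\sigma_{\min}(R_{11})$, which is the first claim. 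Symmetrically, to bound $\sigma_1(R_{22})$ from above I would apply $\sigma_{k+1}(R)=\max_{\dim V=k+1}\min_{0\ne v\in V}\|Rv\|/\|v\|$ to the $(k+1)$-dimensional space $V=\operatorname{span}(e_1,\dots,e_k)+\operatorname{span}\binom{-Gx}{x}$, where $x$ is a unit top right singular vector of $R_{22}$: here $Rv=\binom{R_{11}c}{tR_{22}x}$, so $\|Rv\|^2\ge\sigma_{\min}(R_{11})^2(\|c\|^2+\beta^2t^2)$, and the complementary estimate $\|c-tGx\|^2\le(1+\epsilon)\|c\|^2+(1+\epsilon^{-1})\alpha^2t^2$ with a suitably matched $\epsilon$ yields $\|Rv\|\ge\sigma_1(R_{22})\|v\|/\tau$ on all of $V$, hence $\sigma_1(R_{22})\le\tau\,\sigma_{k+1}(R)=\tau\,\sigma_{k+1}(A)$.

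I expect the determinantal identity in the first paragraph to be the main obstacle — it is the bridge from the purely combinatorial statement ``no single column exchange increases the volume by a factor more than $\mu$'' to the linear-algebraic quantities $\|R_{11}^{-1}R_{12}\|$ and $\sigma_1(R_{22})/\sigma_{\min}(R_{11})$, and getting it right amounts to checking precisely which $k$-element row subsets survive in the Cauchy--Binet expansion of $\det(\widetilde R^{\top}\widetilde R)$, $\widetilde R$ being the exchanged factor. The second delicate point is the constant: bounding the two terms of the exchange ratio separately costs a factor and leads to a bound weaker than $\sqrt{k(n-k)\mu^2+1}$, so one must keep their sum and optimize $\epsilon$ in the norm estimates to recover precisely the ``$+1$'' under the square root. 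The overall scheme is the classical rank-revealing QR analysis in the spirit of Gu--Eisenstat, here specialized to Pan's local-maximum-volume pivoting.
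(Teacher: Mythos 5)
The paper itself contains no proof of this statement: Theorem~\ref{local} is quoted from Pan's work on rank-revealing factorizations and is used purely as background, so there is no in-paper argument to compare yours against. Judged on its own, your proposal is correct and follows the classical RRQR analysis (Hong--Pan / Gu--Eisenstat style): an exchange-determinant identity converts the local $\mu$-maximum hypothesis into norm bounds on the blocks of $R$, and Courant--Fischer finishes. The determinantal identity you state is valid: in the Cauchy--Binet expansion of $\det(\widetilde R^{\top}\widetilde R)$ for the exchanged column set, only the all-top-rows subset and the subsets with exactly one bottom row survive, the former contributing $\det(R_{11})^2G_{ij}^2$ by Cramer's rule and the latter summing to $\det(R_{11})^2\,\|(R_{22})_{\cdot j}\|_2^2\,\|e_i^{\top}R_{11}^{-1}\|_2^2$ via cofactors of $R_{11}$; note the definition of local $\mu$-maximum already forces $\det R_{11}\neq 0$, so $G=R_{11}^{-1}R_{12}$ is well defined. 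Summing over $i,j$ and passing to spectral norms via $\|G\|_2\le\|G\|_F$ and $\|R_{22}\|_2\|R_{11}^{-1}\|_2\le\|R_{22}\|_F\|R_{11}^{-1}\|_F$ indeed gives $\alpha^2+\beta^2\le\tau^2-1$ with $\tau^2=k(n-k)\mu^2+1$. Your two subspace constructions also check out: for the $\sigma_{\min}(R_{11})$ bound, with $\epsilon=1-\tau^{-2}\in(0,1)$ the coefficientwise requirements reduce to $\tau^2(1-\epsilon)\ge 1$ (equality) and $\alpha^2\le\tau^2-1$, both available; for the $\sigma_1(R_{22})$ bound, a suitable $\epsilon$ exists precisely when $\alpha^2\beta^2\le(\tau^2-\beta^2)(\tau^2-1-\alpha^2)$, which follows from $\alpha^2+\beta^2\le\tau^2-1$ (and the case $\beta=0$ is trivial). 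So both inequalities hold with exactly the stated constant; in a full write-up the only steps deserving complete detail are the Cauchy--Binet bookkeeping and the explicit verification of the two coefficient inequalities for your chosen $\epsilon$, since that is where the exact factor $\sqrt{k(n-k)\mu^2+1}$ is secured.
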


We note that, MAX-VOL asks for a stronger property of the set of
vectors to be chosen, i.e. it asks for a ``good'' set of vectors
in a global sense rather than only requiring local optimality.
Obviously, a solution to MAX-VOL provides a set of vectors with
local maximum volume.

Independently of our work, there are some results in computational
geometry which are related to the ability to construct large
simplices embedded in V-polytopes. Essentially, the problem we
consider is a more general version of finding a large simplex in a
V-polytope, where the vertices of the polytope are the column
vectors. The results in this area are similar to ours in spirit
but using different techniques \cite{Gritzmann,Koutis,Packer}. The
most relevant work to ours is that of Koutis \cite{Koutis}, which
shows exponential inapproximability for finding a large simplex in
a V-polytope. He provides a reduction from set packing using an
inapproximability result of \cite{Hazan}, whereas our reduction is
directly from the Label Cover problem.


\section{The Label-Cover Problem}
Our reduction will be from the Label Cover problem. Label Cover
combinatorially captures the expressive power of a 2-prover
1-round proof system for the problem Max-3SAT(5). Specifically,
there exists a reduction from Max-3SAT(5) to Label Cover, so that
using the well known parallel repetition technique for the
specified proof system yields a new $k$-fold Label Cover instance.
For simplicity, we prefer to state our reduction from Label Cover
and for the sake of completeness, we provide a canonical reduction
from Max-3SAT(5) to Label Cover.

Max-3SAT(5) is defined as follows: Given a set of $5n/3$ variables
and $n$ clauses in conjunctive normal form where each clause
contains three distinct variables and each variable appears in
exactly five clauses, find an assignment of variables such that it
maximizes the fraction of satisfied clauses. The following result
is well known \cite{Pcp,Pcp2}:

\begin{thm}
\label{pcp-thm} There is a constant $\epsilon > 0$, such that it
is NP-hard to distinguish between the instances of Max-3SAT(5)
having optimal value $1$ and optimal value at most $(1-\epsilon)$.
\end{thm}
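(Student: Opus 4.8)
The plan is to obtain the result by composing three reductions --- Max-3SAT(5) $\to$ Label Cover $\to$ (parallel-repeated) Label Cover $\to$ MAX-VOL --- and tracking how the soundness gap evolves. Starting from Theorem~\ref{pcp-thm}, the canonical reduction to Label Cover yields a bipartite constraint graph over two constant-size label sets in which a satisfiable formula maps to a Label Cover instance all of whose edges are simultaneously satisfiable, while a $(1-\epsilon)$-unsatisfiable formula maps to one in which no labeling satisfies more than a $(1-\epsilon')$ fraction of the edges, for an absolute constant $\epsilon'>0$. Applying the Parallel Repetition Theorem~\cite{Raz} with $\ell$ repetitions produces a Label Cover instance $L_\ell$ over label sets of size $|\Sigma|^\ell$ whose ``yes'' case still admits a perfect labeling and whose ``no'' case admits no labeling satisfying more than a $2^{-c_0\ell}$ fraction of its $N$ edges, for an absolute constant $c_0>0$. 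For every fixed $\ell$, the size of $L_\ell$ is polynomial in the size of the original formula; $\ell$ will be chosen to be a suitable constant at the end.

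The heart of the argument is a gap-preserving reduction from $L_\ell$ to MAX-VOL. I would construct a matrix $A$ whose columns are organized into gadgets, one per edge of $L_\ell$. Each edge gadget contributes a fixed number $t$ of column vectors supported on a block of coordinates private to that edge, together with shared ``label'' coordinates indexed by (vertex, label) pairs. The gadget is designed so that (i) the $t$ ``canonical'' columns of an edge gadget contribute a volume factor of exactly $1$ --- orthonormal among themselves and orthogonal to all columns used for other edges --- precisely when the labels assigned to its two endpoints are mutually consistent and satisfy the edge constraint; and (ii) whenever the $t$ columns chosen from a gadget correspond to an inconsistent or constraint-violating pair of endpoint labels, the volume contribution of that gadget is at most $1-\gamma$ for an absolute constant $\gamma>0$, no matter which $t$ of the gadget's columns are taken --- this is forced by arranging the relevant vectors so that a pair of them is separated by an angle bounded away from $0$ and from $\pi/2$. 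Setting $k = tN$, any feasible MAX-VOL solution must take exactly $t$ columns from each gadget, and hence encodes a labeling of $L_\ell$ together with per-edge ``claims''. Using the recursive volume formula and the Union Lemma (Lemma~\ref{union}) --- which lets us bound a gadget's contribution by the product of the distances of its chosen vectors to the span of the previously chosen ones --- we get: if $L_\ell$ is a ``yes'' instance then $Vol(Opt)=1$ (take the canonical columns of a perfect labeling), whereas if $L_\ell$ is a ``no'' instance then at least a $(1-2^{-c_0\ell})$ fraction of the $N$ gadgets are violated, so $Vol(Opt)\le (1-\gamma)^{(1-2^{-c_0\ell})N}\le 2^{-c_1 N}$ once $\ell$ is a large enough constant. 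Since $k=tN$ with $t$ constant, this reads $Vol(Opt)\le 2^{-(c_1/t)k}$.

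It remains to do the size and dimension bookkeeping and pin down $\delta$. The matrix $A$ has $n = tN + (\text{number of (vertex,label) pairs})$ columns and polynomially many rows; by appending extra orthonormal ``filler'' columns, each contributing a volume factor $1$ whenever selected, one can drive the ratio $k/n$ down to any prescribed constant $\delta<1$ at the cost of only rescaling the constant in the exponent. All entries can be taken to be rationals of polynomially bounded bit length: the few square roots needed in the gadget are replaced by sufficiently close rationals, and a routine perturbation argument shows the constant $\gamma$ (and the exact value $1$ in the yes case, up to negligible error) survive. Choosing $c$ slightly below $c_1/t$, a polynomial-time $2^{-ck}$-approximation for MAX-VOL would output a solution of value $\ge 2^{-ck}$ in the yes case and (trivially) a solution of value $\le 2^{-(c_1/t)k}<2^{-ck}$ in the no case, thereby distinguishing the two; this distinguishes yes from no instances of $L_\ell$, hence of Max-3SAT(5), contradicting Theorem~\ref{pcp-thm} unless $P=NP$.

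I expect the main obstacle to be engineering the edge gadget so that a single violated constraint costs a clean constant multiplicative factor $\gamma$ in volume that is robust both to the adversary's freedom in choosing which $t$ columns of the gadget to include and to the way the shared label coordinates couple many simultaneously violated gadgets --- that is, ensuring the per-gadget penalties genuinely \emph{multiply} rather than partially cancel. The Union Lemma makes the multiplicativity rigorous on the soundness side, but getting the gadget geometry right (exact volume $1$ in the yes case and a clean constant penalty otherwise, realized ``directly from Label Cover'' without the intermediate problems used in prior work) is where the real work lies.
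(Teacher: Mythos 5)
You have proved the wrong statement. The theorem in question is Theorem~\ref{pcp-thm} itself: the constant-gap NP-hardness of Max-3SAT(5). In the paper this has no proof --- it is quoted as a well-known consequence of the PCP theorem \cite{Pcp,Pcp2}, combined with the standard occurrence-bounding reduction from Max-3SAT to Max-3SAT(5) \cite{Feige}, which only changes the constant $\epsilon$. Your proposal instead sketches the paper's \emph{main} result (exponential inapproximability of MAX-VOL) by composing Max-3SAT(5) $\to$ Label Cover $\to$ parallel repetition $\to$ MAX-VOL, and, crucially, it opens with ``Starting from Theorem~\ref{pcp-thm}, the canonical reduction to Label Cover yields\dots''. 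As a proof of Theorem~\ref{pcp-thm} this is circular: you assume the very statement you are asked to establish, and nothing in the edge-gadget construction, the Union Lemma bookkeeping, or the choice of $\ell$ says anything about where the initial constant gap for Max-3SAT(5) comes from. The final contradiction you derive (``a $2^{-ck}$-approximation would distinguish yes from no instances of Max-3SAT(5), contradicting Theorem~\ref{pcp-thm}'') is a hardness statement about MAX-VOL, not a proof of the gap for Max-3SAT(5).

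What a correct treatment of this statement looks like is much shorter and entirely different in content: (i) invoke the PCP theorem, $NP = PCP(O(\log n), O(1))$, which yields a constant $\epsilon_0 > 0$ such that it is NP-hard to distinguish satisfiable 3CNF formulas from those in which every assignment satisfies at most a $(1-\epsilon_0)$ fraction of clauses; and (ii) apply the standard degree-reduction: replace each variable occurring $d$ times by $d$ fresh copies, add consistency (equality) clauses along a constant-degree expander on the copies, and pad so that every variable appears in exactly five clauses; the expander guarantees that an assignment violating few consistency clauses can be ``rounded'' to a consistent one at small cost, so a (smaller) constant gap $\epsilon$ survives. Either of these steps may legitimately be cited rather than reproved --- that is exactly what the paper does --- but the MAX-VOL reduction you describe is irrelevant to this statement, and using Theorem~\ref{pcp-thm} as a hypothesis disqualifies the argument as a proof of it.
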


Although this result was proved for general 3CNF formulas, without
the requirement that each variable appears exactly $5$ times,
there is a standard reduction from Max-3SAT to Max-3SAT(5)
\cite{Feige}, which only results in a difference in the constant
$\epsilon$.

A Label Cover instance $L$ is defined as follows:
\begin{displaymath}L = (G(V,W,E),(\Sigma_V, \Sigma_W),\Pi)\end{displaymath} where

\begin{itemize}
\item $G(V,W,E)$ is a regular bipartite graph with vertex sets $V$
and $W$, and the edge set $E$. \vspace{1mm}

\item $\Sigma_V$ and $\Sigma_W$ are the label sets associated with
$V$ and $W$, respectively. \vspace{1mm}

\item $\Pi$ is the collection of constraints on the edge set,
where the constraint on an edge $e$ is defined as a function
$\Pi_e: \Sigma_V \rightarrow \Sigma_W$.
\end{itemize}

\noindent A labeling is an assignment to the vertices of the
graph, $\sigma: \{V \rightarrow \Sigma_V \} \cup \{W \rightarrow
\Sigma_W \}$. It is said to satisfy an edge $e=(v,w)$ if
$\Pi_e(\sigma(v)) = \sigma(w)$. The Label Cover problem asks for
an assignment $\sigma$ such that the fraction of the satisfied
edges is maximum.

A standard reduction from Max-3SAT(5) to Label Cover reveals that

\begin{thm}
There is a constant $\epsilon' > 0$, such that it is NP-hard to
distinguish between the instances of Label Cover having optimal
value $1$ and optimal value at most $(1-\epsilon')$.
\end{thm}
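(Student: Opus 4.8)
The plan is to describe the textbook reduction from Max-3SAT(5) to Label Cover and then invoke Theorem~\ref{pcp-thm}; all the hardness is inherited from there, so the work is only to check that the reduction preserves the gap up to a constant factor.

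First I would set up the instance. Given a Max-3SAT(5) formula with $n$ clauses (each over three distinct variables, each variable occurring in exactly five clauses), let $V$ have one vertex per clause and $W$ one vertex per variable, and place an edge $e=(v,w)$ whenever the variable of $w$ occurs in the clause of $v$. Then $G(V,W,E)$ is bi-regular with degree $3$ on the $V$-side and degree $5$ on the $W$-side and $|E|=3n$; if the Label Cover definition in force insists on a single common degree, one applies the standard edge/vertex duplication, which rescales $|E|$ but leaves every ratio below unchanged. Set $\Sigma_W=\{0,1\}$, the truth values of a variable, and let $\Sigma_V$ be the seven assignments to a clause's three variables that satisfy it. For $e=(v,w)$, let $\Pi_e\colon\Sigma_V\to\Sigma_W$ send a satisfying assignment of $v$'s clause to the value it assigns the variable of $w$; this is a genuine function because restriction of an assignment to one coordinate is deterministic.

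Next I would verify the two sides of the gap. Completeness: if the formula has a satisfying assignment $a$, label $w\in W$ by $a$'s value on its variable and $v\in V$ by $a$ restricted to the three variables of its clause (a legal element of $\Sigma_V$, since $a$ satisfies the clause); every edge constraint then holds, so the optimum of the Label Cover instance is $1$. Soundness: given a labeling $\sigma$ satisfying more than a $(1-\epsilon')$ fraction of $E$, define the assignment $a(x)=\sigma(w_x)$, where $w_x$ is the $W$-vertex of variable $x$. The key observation is that whenever all three edges at a clause vertex $v$ are satisfied, $\sigma(v)$ agrees with $a$ on that clause's three variables, and since $\sigma(v)\in\Sigma_V$ is by construction satisfying, $a$ satisfies that clause. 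Fewer than $3\epsilon' n$ edges are unsatisfied, and each clause vertex with an unsatisfied incident edge ``uses up'' at least one such edge, so fewer than $3\epsilon' n$ clauses fail to be satisfied by $a$; hence $a$ satisfies more than $(1-3\epsilon')n$ clauses. Taking $\epsilon'=\epsilon/3$ for the $\epsilon$ of Theorem~\ref{pcp-thm} completes the argument: an algorithm separating Label Cover value $1$ from value at most $1-\epsilon'$ would separate Max-3SAT(5) value $1$ from value at most $1-\epsilon$.

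I do not expect a genuine obstacle here. The only points requiring attention are the verification that each $\Pi_e$ is well-defined (done above) and the soundness bookkeeping, where one must count missed \emph{edges} rather than missed clauses and pass through the bi-regularity of $G$; the gap constant degrades by exactly the factor $3$, which is harmless for an NP-hardness statement.
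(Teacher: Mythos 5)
Your proposal is correct and is exactly the standard clause/variable Label Cover reduction that the paper invokes without spelling out; the paper states the theorem as following from "a standard reduction" and the parameters you derive ($|\Sigma_V|=7$, $|\Sigma_W|=2$, degree $3$ on the clause side and $5$ on the variable side) are precisely those the paper later assumes for the $\ell$-fold instance. The completeness argument and the edge-counting soundness bound with the factor-$3$ loss in the gap are the canonical proof.
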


In order to amplify the gap, one can define a new Label Cover
instance for which the vertex set is essentially a set Cartesian
product of the original one. This instance, as follows, captures a
standard 2-prover 1-round protocol with parallel repetition $\ell$
times applied. We first note that for a given set $S = \{s_1,
\hdots ,s_n\}$, $S^{\ell}$ consists of all $\ell$-tuples of the
form $(s_{i_1}, \hdots , s_{i_{\ell}})$ where $s_{i_j} \in S$ and
$i_j$ runs over $\{1, \hdots ,n\}$ for $\ell \geq j \geq 1$. Given
the original Label Cover instance $L = (G(V,W,E),(\Sigma_V,
\Sigma_W),\Pi)$ reduced from Max-3SAT(5), let

$$L^{\ell} =
(G^{\ell}(V^{\ell},W^{\ell},E^{\ell}),\Sigma_V^{\ell},\Sigma_W^{\ell},\Pi^{\ell}),$$

\noindent where $V^{\ell}$, $W^{\ell}$, $\Sigma_V^{\ell}$ and
$\Sigma_W^{\ell}$ are the $\ell$ times Cartesian products of the
sets $V$, $W$, $\Sigma_V$ and $\Sigma_W$, respectively as defined
above. Let

\begin{itemize}
\item $E^{\ell}$ consist of all edges of the form $e = (v,w)$
where $v = (v_{i_1}, \hdots ,v_{i_{\ell}})$ and $w = (w_{i_1},
\hdots , w_{i_{\ell}})$ satisfying $(v_{i_j},w_{i_j}) \in E$ and
for all $\ell \geq j \geq 1$. \vspace{1.5mm}

\item $\Pi^{\ell}$ be the collection of constraints on the edge
set $E^{\ell}$. The constraint on an edge $e=(v,w)$ where $v =
(v_{i_1}, \hdots, v_{i_{\ell}})$ and $w = (w_{i_1}, \hdots,
w_{i_{\ell}})$ is a function $\Pi^{\ell}_e: \Sigma_V^{\ell}
\rightarrow \Sigma_W^{\ell}$ which is essentially an $\ell$-tuple
constraint ($\Pi^{\ell}_{e_1}, \hdots \Pi^{\ell}_{e_{\ell}}$),
where $\Pi^{\ell}_{e_j} = \Pi_{(v_{i_j},w_{i_j})}$ for $\ell \geq
j \geq 1$.

\end{itemize}

\noindent A labeling $\sigma$ of the vertices $V^{\ell}$ and
$W^{\ell}$ is said to satisfy an edge $e=(v,w)$ where $v =
(v_{i_1}, \hdots, v_{i_{\ell}})$ and $w = (w_{i_1}, \hdots,
w_{i_{\ell}})$, if $\Pi^{\ell}_e(\sigma(v)) = \sigma(w)$. Note
that this requirement is equal to
$\Pi_{(v_{i_j},w_{i_j})}(\sigma(v_{i_j})) = \sigma(w_{i_j})$ for
all $\ell \geq j \geq 1$. It is easy to see that, in this new
Label Cover instance, $|V| = (5n/3)^{\ell}$, $|W| = n^{\ell}$,
$|E| = (5n)^{\ell}$, $|\Sigma_V^{\ell}| = 7^{\ell}$ and
$|\Sigma_W^{\ell}| = 2^{\ell}$; the degrees of the vertices in $V$
and $W$ is $3^{\ell}$ and $5^{\ell}$, respectively. The following
theorem is a well known result by Raz \cite{Raz}:

\begin{thm}
\label{raz} There is an absolute constant $\alpha > 0$, such that
it is NP-hard to distinguish between the case that $OPT(L^{\ell})
= 1$ and $OPT(L^{\ell}) \leq 2^{-\alpha \ell}$.
\end{thm}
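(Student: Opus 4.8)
The plan is to derive Theorem~\ref{raz} by composing the hardness of the base Label Cover instance $L$ (the immediately preceding theorem, which produces a constant $\epsilon'>0$ for which distinguishing $OPT(L)=1$ from $OPT(L)\le 1-\epsilon'$ is NP-hard) with a gap-amplification step that tracks how $OPT$ behaves under the Cartesian-product construction $L\mapsto L^{\ell}$. For fixed $\ell$ the sizes of $G^{\ell}$, $\Sigma_V^{\ell}$ and $\Sigma_W^{\ell}$ blow up only by a constant power of the original sizes, so $L^{\ell}$ is computable from $L$ in polynomial time; hence it suffices to show that the map $L\mapsto L^{\ell}$ sends instances of value $1$ to instances of value $1$, and instances of value at most $1-\epsilon'$ to instances of value at most $2^{-\alpha\ell}$, and then invoke the NP-hardness of the base gap.

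Completeness is immediate. If $\sigma$ is a labeling of $L$ satisfying every edge, define $\sigma^{\ell}$ on $V^{\ell}\cup W^{\ell}$ coordinate-wise by $\sigma^{\ell}(v_{i_1},\dots,v_{i_\ell})=(\sigma(v_{i_1}),\dots,\sigma(v_{i_\ell}))$, and likewise on $W^{\ell}$. By the definition of $\Pi^{\ell}$, an edge $e=(v,w)\in E^{\ell}$ is satisfied by $\sigma^{\ell}$ exactly when $\Pi_{(v_{i_j},w_{i_j})}(\sigma(v_{i_j}))=\sigma(w_{i_j})$ for every $j$, which holds because $\sigma$ satisfies all edges of $L$; hence $OPT(L^{\ell})=1$.

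For soundness I would reinterpret $L$ as a two-prover one-round game $\mathcal{G}$: the verifier picks an edge $(v,w)\in E$ uniformly (legitimate since $G$ is regular), sends $v$ to the first prover and $w$ to the second, receives answers $a\in\Sigma_V$ and $b\in\Sigma_W$, and accepts iff $\Pi_{(v,w)}(a)=b$. A deterministic pair of prover strategies is precisely a labeling of $V\cup W$, randomized strategies do not help by averaging, and the acceptance probability is the fraction of satisfied edges, so $\mathrm{val}(\mathcal{G})=OPT(L)$. The instance $L^{\ell}$, which permits \emph{arbitrary} (possibly coordinate-correlated) labelings $V^{\ell}\to\Sigma_V^{\ell}$ and $W^{\ell}\to\Sigma_W^{\ell}$ and whose uniform edge distribution is the $\ell$-fold product of the uniform distribution on $E$, is exactly the $\ell$-fold parallel repetition $\mathcal{G}^{\otimes\ell}$, and $\mathrm{val}(\mathcal{G}^{\otimes\ell})=OPT(L^{\ell})$ since $G^{\ell}$ is again regular. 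Now apply the Parallel Repetition Theorem~\cite{Raz}: when $\mathrm{val}(\mathcal{G})\le 1-\epsilon'$ and the answer alphabets have constant size $s$ (here $s\le 7$), $\mathrm{val}(\mathcal{G}^{\otimes\ell})\le\bigl(1-(\epsilon')^{3}/c\bigr)^{\ell/\log s}$ for an absolute constant $c$. Since $\epsilon'$ and $s$ are fixed constants this is of the form $2^{-\alpha\ell}$ with $\alpha=\alpha(\epsilon')>0$, which is the claimed bound; combining the two directions with the NP-hardness of the base gap proves Theorem~\ref{raz}.

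The only genuinely hard ingredient is the Parallel Repetition Theorem itself. The naive hope $\mathrm{val}(\mathcal{G}^{\otimes\ell})=\mathrm{val}(\mathcal{G})^{\ell}$ is false in general, so one cannot simply multiply values. Raz's proof is an information-theoretic argument: from a strategy for $\mathcal{G}^{\otimes\ell}$ that wins with probability $p$, one conditions on the event that it wins on a random block of coordinates and shows, via bounds on the mutual information between the questions in a remaining coordinate and this conditioning event, that the conditional question distribution there is statistically close to the original product distribution; a correlated-sampling step then extracts a single-round strategy for $\mathcal{G}$ winning with probability essentially $p^{1/\Theta(\ell)}$, forcing $p\le 2^{-\alpha\ell}$. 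In a self-contained write-up I would cite \cite{Raz} (or Holenstein's simplification, or the Dinur--Steurer fortified version) as a black box rather than reprove it, consistent with the paper treating the statement as well known; everything else is the elementary coordinate-wise bookkeeping above.
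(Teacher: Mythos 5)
Your proposal is correct and follows essentially the same route as the paper: the paper simply quotes this statement as Raz's well-known result, i.e.\ the base Label Cover gap composed with the Parallel Repetition Theorem applied to the associated 2-prover 1-round game, which is exactly the black-box use you make of \cite{Raz}. Your completeness/soundness bookkeeping (product labeling, identification of $OPT(L^{\ell})$ with the value of the $\ell$-fold repeated game, constant answer alphabet, polynomial size for constant $\ell$) is the standard argument the paper leaves implicit, and it is accurate.
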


\section{Exponential Inapproximability of MAX-VOL}
\subsection{The Basic Gadget}
At the heart of our analysis is a set of vectors with a special
property. We will use a set of vectors (composed of binary entries
for simplicity of construction) such that any two of them have
large dot-product. We will also require that the dot product of a
vector and the binary complement of any other vector is large.
More specifically, we need these dot products be proportional to
the Euclidean norms squared of the vectors.

Given a vector $v = (v_1 \ldots v_m)$ where $v_i \in \{0,1\}$ for
$m \geq i \geq 1$, we denote the binary complement of $v$ by
$\overline{v} = (\overline{v_1} \ldots \overline{v_m})$ where
$\overline{v_i} = 1$ if $v_i = 0$, and $\overline{v_i} = 0$
otherwise. We begin with the following lemma:

\begin{lem}
\label{existence-vector} For $m \geq 2$, there exists a set of
vectors $B = \{b_1, \hdots, b_{2^m-1}\}$ of dimension $2^m$ with
binary entries such that the following three conditions hold:

\begin{enumerate}
\item ${\|b_i\|}_2 = 2^{(m-1)/2}$ for $2m-1 \geq i \geq 1$
\vspace{1.5mm} \item $b_i \cdot \overline{b_j} = 2^{m-2}$ for
$2m-1 \geq i
> j \geq 1$. \vspace{1.5mm} \item $b_i \cdot b_j = 2^{m-2}$ for $2m-1 \geq i > j
\geq 1$.
\end{enumerate}
\end{lem}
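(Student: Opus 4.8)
The plan is to construct the set $B$ explicitly using characteristic vectors of subsets of an $m$-element ``index'' universe, but indexed by the $2^m$ coordinates in a way that forces every pairwise dot product to equal $2^{m-2}$. Concretely, I would identify the $2^m$ coordinates with the $2^m$ subsets $T \subseteq [m]$ (equivalently, with binary strings of length $m$), and for each nonempty $S \subseteq [m]$ define a vector $b_S \in \{0,1\}^{2^m}$ by setting the $T$-th coordinate of $b_S$ to be the parity $\bigoplus_{i \in S} [i \in T]$, i.e. $b_S$ is the evaluation table of the linear functional $\chi_S(T) = \sum_{i\in S} T_i \bmod 2$. This gives $2^m - 1$ vectors $\{b_S : \emptyset \neq S \subseteq [m]\}$, matching the stated count (note the index range in the lemma, written ``$2m-1$'', should read $2^m-1$).

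The three conditions then reduce to elementary counting over $\mathbb{F}_2^m$. For condition 1, $\|b_S\|_2^2$ counts the number of $T$ with $\chi_S(T) = 1$; since $\chi_S$ is a nonzero linear functional, exactly half of the $2^m$ inputs map to $1$, so $\|b_S\|_2^2 = 2^{m-1}$ and $\|b_S\|_2 = 2^{(m-1)/2}$. For condition 3, $b_S \cdot b_{S'} = |\{T : \chi_S(T) = \chi_{S'}(T) = 1\}|$; when $S \neq S'$ both nonempty, the map $T \mapsto (\chi_S(T), \chi_{S'}(T))$ is a surjective linear map onto $\mathbb{F}_2^2$, so each of the four output pairs is hit by exactly $2^{m-2}$ inputs, giving $b_S \cdot b_{S'} = 2^{m-2}$. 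For condition 2, observe that the coordinatewise complement of $b_{S'}$ is the evaluation table of $\chi_{S'} + 1$ (an affine, still non-constant function whenever $S' \neq \emptyset$), so $b_S \cdot \overline{b_{S'}} = |\{T : \chi_S(T) = 1,\ \chi_{S'}(T) = 0\}|$, and by the same surjectivity argument this is again $2^{m-2}$.

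The only genuine subtlety — the part I would treat as the ``main obstacle'' — is verifying the joint-surjectivity (full-rank) claim that underlies conditions 2 and 3: one must be sure that for distinct nonempty $S, S'$ the functionals $\chi_S, \chi_{S'}$ are linearly independent over $\mathbb{F}_2$, so that $(\chi_S, \chi_{S'})$ is onto $\mathbb{F}_2^2$. This holds because $\chi_S = \chi_{S'}$ as functions iff $S = S'$, and the only nonzero $\mathbb{F}_2$-linear dependence among two distinct nonzero functionals would force one to equal the other; hence they are independent and the map $T \mapsto (\chi_S(T), \chi_{S'}(T))$ has a kernel of dimension exactly $m-2$, i.e. every fiber has size $2^{m-2}$. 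With that pinned down, conditions 1–3 all follow by reading off the appropriate fiber sizes, and the construction is manifestly polynomial-time (and uses only binary entries), completing the proof. I would remark at the end that any affine-equivalent description (e.g. taking the rows of a $2^m \times m$ matrix listing all binary strings and forming all nonempty parity combinations) yields the same gadget, and that the condition $m \geq 2$ is exactly what is needed for $2^{m-2}$ to be a nonnegative integer and for the functionals to admit two independent members.
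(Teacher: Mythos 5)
Your construction is, up to a global binary complement, exactly the paper's: the paper takes the rows of the Sylvester Hadamard matrix $H_{S,T}=(-1)^{\chi_S(T)}$ with $-1\mapsto 0$ (i.e.\ the indicator of $\chi_S(T)=0$), while you take the evaluation tables of $\chi_S$ itself, and the two differ only by complementation, which the lemma's conditions are invariant under. Your fiber-counting via linear independence of $\chi_S,\chi_{S'}$ is just a rigorous spelling-out of the paper's appeal to ``the properties of Hadamard matrices,'' so the proposal is correct and essentially the same proof (and you are right that the displayed index bound ``$2m-1$'' is a typo for $2^m-1$).
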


\begin{proof}
Consider the Hadamard matrix $H$ of dimension $2^m \times 2^m$
with entries $-1$ and $1$, constructed recursively by Sylvester's
method. Let $B$ be the $(2^m-1) \times 2^m$ matrix consisting of
the rows of $H$ for which we replace $-1$'s with $0$'s, excluding
the all $1$'s row. We claim that the rows of $B$ satisfy the
requirements. Indeed, by the properties of Hadamard matrices, each
row of $B$ has exactly $2^{m-1}$ $1$'s which satisfies the first
requirement. Note also that, for $m \geq 2$, two distinct rows of
$H$ (excluding the all $1$'s vector) have exactly $2^{m-2}$
element-wise dot-products of the following four types: $1 \cdot
1$, $1 \cdot (-1)$, $(-1) \cdot 1$, $(-1) \cdot (-1)$. Considering
the construction of $B$, we have that the dot-product of any two
of its rows is $2^{m-2}$ since all the products in $H$ involving
$-1$ vanishes for $B$. Similarly the dot-product of a row with the
binary complement of another row is $2^{m-2}$ by symmetry. Thus,
the second and the third requirement also hold.
\end{proof}

\subsection{The Reduction}
Lemma \ref{existence-vector} guarantees the existence of of a set
of binary vectors $B = \{b_1, \hdots, b_{2^{\ell}}\}$ of dimension
$2^{\ell+1}$ such that the following three conditions hold:

\begin{enumerate}
\item ${\|b_i\|}_2 = 2^{\ell/2}$ for $2^{\ell} \geq i \geq 1$
\vspace{1mm} \item $b_i \cdot \overline{b_j} = 2^{\ell-1}$ for
$2^{\ell} \geq i
> j \geq 1$. \vspace{1mm} \item $b_i \cdot b_j = 2^{\ell-1}$ for $2^{\ell} \geq i
> j \geq 1$.
\end{enumerate}

\noindent $B$ can be constructed in time $O(2^{2\ell})$. In our
reduction, $\ell$ will be a constant (to be exactly determined
later) inversely proportional to $\alpha$ which is the constant in
Raz' Theorem. Hence, one can construct $B$ in constant time. For
the sake of simplicity of our argument, we normalize the vectors
in $B$, which then clearly satisfies

\begin{enumerate}
\item ${\|b_i\|}_2 = 1$ for $2^{\ell} \geq i \geq 1$
\vspace{1.5mm} \item $b_i \cdot \overline{b_j} = 1/2$ for
$2^{\ell} \geq i > j \geq 1$. \vspace{1.5mm} \item $b_i \cdot b_j
= 1/2$ for $2^{\ell} \geq i > j \geq 1$.
\end{enumerate}

Given a Max-3SAT(5) instance and the reduction described in the
previous section, we will define a column vector for each
vertex-label pair in $L^{\ell}$, making $(35n/3)^{\ell} +
(2n)^{\ell}$ vectors in total. (Note that $|V^{\ell}| =
(5n/3)^{\ell}$, $|W^{\ell}| = n^{\ell}$,
$\Sigma_V^{\ell}=\{1,\hdots,7^{\ell}\}$ and $\Sigma_W^{\ell} =
\{1,\hdots 2^{\ell}\}$). Each vector will be composed of
$|E^{\ell}| = (5n)^{\ell}$ ``blocks'' which are either vectors
from the set $B$ or the zero vector according to the adjacency
information. More specifically, let $A_{v,i}$ be the vector for
the vertex label pair $v \in V^{\ell}$ and $i \in
\Sigma_V^{\ell}$. Similarly let $A_{w,j}$ be the vector for the
pair $w \in W$ and $j \in \Sigma_W^{\ell}$. Both of these vectors
are $(5n)^{\ell} 2^{\ell+1}$ dimensional. The block of $A_{v,i}$
corresponding to an edge $e \in E^{\ell}$ is denoted by
$A_{v,i}(e)$. The block of $A_{w,j}$ corresponding to an edge $e
\in E^{\ell}$ is denoted by $A_{w,j}(e)$. We define

$$
A_{v,i}(e) = \left\{
\begin{array}{ll}
\displaystyle\frac{\overline{b_{\Pi^{\ell}_e(i)}}}{3^{\ell/2}} &
\text{if } e \text{ is incident to $v$} \vspace{1.5mm} \\
\overrightarrow{0} & \text{if } e \text{ is not incident to } v .
\end{array}
\right.
$$

$$
A_{w,j}(e) = \left\{
\begin{array}{ll}
\displaystyle\frac{b_j}{5^{\ell/2}} & \text{if } e \text{ is
incident to } w \vspace{1.5mm}
\\ \overrightarrow{0} & \text{if } e \text{ is not incident to } w
\end{array} \right.
$$

\begin{figure}[b]
\centering

\setlength{\unitlength}{.4in}
\begin{picture}(10,5)(0,0)
\linethickness{1pt} \put(2,4){\circle*{0.15}}
\put(2,4){\line(1,0){6}} \put(2,2){\circle*{0.15}}
\put(2,2){\line(1,0){6}} \thicklines \put(2,4){\line(3,-1){6}}
\put(8,4){\circle*{0.15}} \put(8,2){\circle*{0.15}}
\put(1.5,4){\makebox(0,0){$v_1$}}
\put(1.5,2){\makebox(0,0){$v_2$}}
\put(8.5,4){\makebox(0,0){$w_1$}}
\put(8.5,2){\makebox(0,0){$w_2$}}
\put(4.8,4.2){\makebox(0,0){$e_1$}}
\put(4.8,1.8){\makebox(0,0){$e_3$}}
\put(5.6,3){\makebox(0,0){$e_2$}} \put(5,1){\circle*{0.08}}
\put(5,0.8){\circle*{0.08}} \put(5,0.6){\circle*{0.08}}
\end{picture}
\caption{A part of a simple bipartite graph representing a
Label-Cover instance} \label{fig:graph}
\end{figure}

\begin{figure}[t]
\setlength{\unitlength}{.25in}
\begin{picture}(20,15)(0,0)
\linethickness{0.5pt} \put(1.5,13){\line(1,0){12.5}}
\put(1.5,12){\line(1,0){12.5}} \put(1.5,11){\line(1,0){12.5}}
\put(1.5,9){\line(1,0){12.5}} \put(1.5,8){\line(1,0){12.5}}
\put(1.5,6){\line(1,0){12.5}} \put(1.5,5){\line(1,0){12.5}}
\put(1.5,3){\line(1,0){12.5}} \put(1.5,2){\line(1,0){12.5}}

\put(1.5,13){\line(0,-1){2}} \put(1.5,9){\line(0,-1){1}}
\put(1.5,6){\line(0,-1){1}} \put(1.5,3){\line(0,-1){1}}

\put(4,13.5){\line(0,-1){12}} \put(6.5,13.5){\line(0,-1){12}}
\put(9,13.5){\line(0,-1){12}}

\put(14,13){\line(0,-1){2}} \put(14,9){\line(0,-1){1}}
\put(14,6){\line(0,-1){1}} \put(14,3){\line(0,-1){1}}

\put(0.8,12.5){\makebox(0,0){$A_{v_1,1}$}}
\put(0.8,11.5){\makebox(0,0){$A_{v_1,2}$}}
\put(0.8,8.5){\makebox(0,0){$A_{v_2,1}$}}
\put(0.8,5.5){\makebox(0,0){$A_{w_1,1}$}}
\put(0.8,2.5){\makebox(0,0){$A_{w_2,1}$}}

\put(2.7,13.5){\makebox(0,0){$e_1$}}
\put(5.2,13.5){\makebox(0,0){$e_2$}}
\put(7.7,13.5){\makebox(0,0){$e_3$}}

\put(2.7,12.5){\makebox(0,0){$\overline{a_{e_1}(1)}$}}
\put(5.2,12.5){\makebox(0,0){$\overline{a_{e_2}(1)}$}}
\put(7.7,12.5){\makebox(0,0){$\overrightarrow{0}$}}
\put(11.5,12.5){\makebox(0,0){$\overrightarrow{0}$}}

\put(2.7,11.5){\makebox(0,0){$\overline{a_{e_1}(2)}$}}
\put(5.2,11.5){\makebox(0,0){$\overline{a_{e_2}(2)}$}}
\put(7.7,11.5){\makebox(0,0){$\overrightarrow{0}$}}
\put(11.5,11.5){\makebox(0,0){$\overrightarrow{0}$}}

\put(2.7,8.5){\makebox(0,0){$\overrightarrow{0}$}}
\put(5.2,8.5){\makebox(0,0){$\overrightarrow{0}$}}
\put(7.7,8.5){\makebox(0,0){$\overline{a_{e_3}(1)}$}}
\put(11.5,8.5){\makebox(0,0){$\overrightarrow{0}$}}

\put(2.7,5.5){\makebox(0,0){$a(1)$}}
\put(5.2,5.5){\makebox(0,0){$\overrightarrow{0}$}}
\put(7.7,5.5){\makebox(0,0){$\overrightarrow{0}$}}
\put(11.5,5.5){\makebox(0,0){$\overrightarrow{0}$}}

\put(2.7,2.5){\makebox(0,0){$\overrightarrow{0}$}}
\put(5.2,2.5){\makebox(0,0){$a(1)$}}
\put(7.7,2.5){\makebox(0,0){$a(1)$}}
\put(11.5,2.5){\makebox(0,0){$\overrightarrow{0}$}}

\put(2,9.8){\circle*{0.07}} \put(2,10){\circle*{0.07}}
\put(2,10.2){\circle*{0.07}}

\put(2,6.8){\circle*{0.07}} \put(2,7){\circle*{0.07}}
\put(2,7.2){\circle*{0.07}}

\put(2,3.8){\circle*{0.07}} \put(2,4){\circle*{0.07}}
\put(2,4.2){\circle*{0.07}}

\put(11,13.3){\circle*{0.07}} \put(11.3,13.3){\circle*{0.07}}
\put(11.6,13.3){\circle*{0.07}}

\put(11,9.3){\circle*{0.07}} \put(11.3,9.3){\circle*{0.07}}
\put(11.6,9.3){\circle*{0.07}}

\put(11,6.3){\circle*{0.07}} \put(11.3,6.3){\circle*{0.07}}
\put(11.6,6.3){\circle*{0.07}}

\put(11,3.3){\circle*{0.07}} \put(11.3,3.3){\circle*{0.07}}
\put(11.6,3.3){\circle*{0.07}}

\put(16.5,8.5){\makebox(0,0){$\overline{a_e(i)} = \displaystyle
\frac{\overline{b_{\Pi_e^{\ell}(i)}}}{3^{\ell/2}}$}}
\put(16.5,7){\makebox(0,0){$a(j) = \displaystyle
\frac{b_j}{5^{\ell/2}}$}}
\end{picture}

\caption{The resulting (row) vectors in MAX-VOL instance computed
from the graph in Figure \ref{fig:graph} by our reduction}
\label{fig:matrix}
\end{figure}

\noindent In order to show how our reduction works, we present a
part of a simple bipartite graph in Figure ~\ref{fig:graph} with
all the edges drawn between two pairs of nodes, and the
corresponding (row) vectors computed by the reduction in Figure
~\ref{fig:matrix}. Note that $A_{v,i}$ has exactly $3^{\ell}$
non-zero blocks, and $A_{w,j}$ has $5^{\ell}$ non-zero blocks.
Hence, according to the definition above, their Euclidean norm is
$1$. The column vector set for the MAX-VOL instance is defined as

$$
A \in \mathbb{R}^{M \times N} = \{A_{v,i}| v \in V^{\ell}, i \in
\Sigma_V^{\ell}\} \cup \{A_{w,j}| w \in W^{\ell}, j \in
\Sigma_W^{\ell}\}.
$$

\noindent Note that $M = (5n)^{\ell} 2^{\ell+1}$ and $N =
(35n/3)^{\ell} + (2n)^{\ell}$, both having polynomial size in $n$
for constant $\ell$. From an intuitive point of view, we define
mutually orthogonal subspaces for each edge, and then we
``spread'' the Euclidean norm of each vector to the subspaces
corresponding to the edges incident to the vertex corresponding to
the vector. A crucial observation for this construction is that,
vectors $A_{v_1,i_1}$ and $A_{v_2,i_2}$ are orthogonal to each
other for all $v_1, v_2 \in V^{\ell}$, and $i_1, i_2 \in
\Sigma_V^{\ell}$, since there are no edges between the vertices in
$V^{\ell}$. The same result holds for the vertices in $W^{\ell}$.
From now on, this fact will be used frequently without explicit
reference. We set the number of column vectors $k$ to be chosen in
the MAX-VOL instance to $|V^{\ell}|+|W^{\ell}| = (5n/3)^{\ell} +
n^{\ell}$. Note that $k$ is a constant fraction of $N$, the total
number of columns, i.e. there exists a constant $\delta < 1$ such
that $k = \delta N$.

\subsection{Analysis}
We start with the completeness of the reduction:

\begin{thm}
If the Label Cover instance $L^{\ell}$ has a labeling that
satisfies all the edges, then in the MAX-VOL instance, there exist
$k$ column vectors with volume $1$.
\end{thm}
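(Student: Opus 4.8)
The plan is to take a satisfying labeling $\sigma$ of $L^{\ell}$ and exhibit $k = |V^{\ell}| + |W^{\ell}|$ columns whose parallelepiped has volume exactly $1$. The natural candidate is the set $S = \{A_{v,\sigma(v)} : v \in V^{\ell}\} \cup \{A_{w,\sigma(w)} : w \in W^{\ell}\}$, i.e., pick for each vertex the single column corresponding to its assigned label. Since each $A_{v,i}$ and each $A_{w,j}$ has Euclidean norm $1$ by construction, it suffices to show that these $k$ vectors are mutually orthogonal; then $Vol(S) = \prod 1 = 1$, and this is clearly optimal since no column has norm exceeding $1$ so no set of $k$ columns can have volume above $1$.

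First I would record the orthogonality relations that hold regardless of the labeling: any two $V^{\ell}$-columns are orthogonal (no $V^{\ell}$-$V^{\ell}$ edges), and likewise any two $W^{\ell}$-columns are orthogonal, as already noted in the construction. So the only inner products that need checking are those of the form $A_{v,\sigma(v)} \cdot A_{w,\sigma(w)}$. Such a product is a sum over edges $e \in E^{\ell}$ of block-wise inner products, and a block contributes a nonzero term only when $e$ is incident to both $v$ and $w$ — that is, only for the unique edge $e = (v,w)$ if it exists in $E^{\ell}$ (and is zero otherwise). For that edge the contribution is $\frac{1}{3^{\ell/2}} \cdot \frac{1}{5^{\ell/2}} \left(\overline{b_{\Pi^{\ell}_e(\sigma(v))}} \cdot b_{\sigma(w)}\right)$. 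Here is where the satisfying labeling enters: because $\sigma$ satisfies edge $e$, we have $\Pi^{\ell}_e(\sigma(v)) = \sigma(w)$, so this inner product is $\overline{b_{\sigma(w)}} \cdot b_{\sigma(w)}$. Since $b_{\sigma(w)}$ is binary with $2^{\ell}$ ones and $\overline{b_{\sigma(w)}}$ is its complement, their supports are disjoint, so $\overline{b_{\sigma(w)}} \cdot b_{\sigma(w)} = 0$. Hence every cross inner product vanishes, the $k$ chosen vectors are orthonormal, and $Vol(S) = 1$.

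The main subtlety — not really an obstacle, but the one place care is needed — is the bookkeeping for which edges contribute to a given $A_{v,\sigma(v)} \cdot A_{w,\sigma(w)}$ and the observation that, crucially, the vanishing uses the \emph{complement} pairing $\overline{b_j} \cdot b_j = 0$ rather than the Hadamard cross-correlations $b_i \cdot b_j = 1/2$ from Lemma~\ref{existence-vector}; those $1/2$ values are what will matter in the soundness direction, not here. One should also note explicitly that $Vol(S) = 1$ is simultaneously the maximum possible value: by the Union Lemma (or directly from the recursive volume formula), $Vol$ of any $k$ columns is at most the product of their norms, which is $1$. Thus a perfectly satisfiable $L^{\ell}$ forces the MAX-VOL optimum to equal $1$, completing the completeness side of the gap.
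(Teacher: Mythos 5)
Your proof is correct and follows essentially the same route as the paper's: pick the single column per vertex dictated by the satisfying labeling, observe that $V^{\ell}$--$V^{\ell}$ and $W^{\ell}$--$W^{\ell}$ pairs are orthogonal by construction, and show the cross inner products vanish via $\overline{b_{\Pi^{\ell}_e(\sigma(v))}}\cdot b_{\sigma(w)} = \overline{b_{\sigma(w)}}\cdot b_{\sigma(w)} = 0$. Your additional remarks (that only the edge $e=(v,w)$ contributes to the block sum, and that $1$ is also an upper bound on any $k$-subset's volume) are correct and make explicit what the paper leaves implicit, but do not change the argument.
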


\begin{proof}
We show that there are at least $k$ orthogonal vectors. For an
edge $e=(v,w)$, let $i \in \Sigma_V^{\ell}$ and $j \in
\Sigma_W^{\ell}$ be the labeling of $v$ and $w$ assigned by the
optimal labeling which satisfies all the edges. Then, in the
MAX-VOL instance the dot product of the vectors $A_{v,i}$ and
$A_{w,j}$ is

\begin{equation}
\label{completeness} A_{v,i} \cdot A_{w,j} = \sum_{e \in E^{\ell}}
A_{v,i}(e) \cdot A_{w,j}(e) = \overline{b_{\Pi^{\ell}_e(i)}} \cdot
b_j = \overline{b_j} \cdot b_j = 0.
\end{equation}

\noindent This is due to the fact that the labeling satisfies $e$,
i.e. $b_{\Pi^{\ell}_e(i)} = b_j$. Since all the edges are
satisfied, and there exists a vector from each vertex
corresponding to the optimal labeling satisfying the equation
(\ref{completeness}), we have $|V^{\ell}|+|W^{\ell}|$ orthogonal
vectors, i.e. we have $k$ orthogonal vectors.
\end{proof}

Before proving the soundness of the reduction, which will prove
hardness of approximation, we first give the intuition for the
argument. According to our construction of the MAX-VOL instance,
there is a set of vectors corresponding to each node in $V^{\ell}$
and $W^{\ell}$. The set of vectors defined for a specific node has
high pair-wise dot products whereas a vector from a node $v_1 \in
V^{\ell}$ and another from $v_2$ in $V^{\ell}$ are orthogonal to
each other. The same goes for the vectors defined for $W^{\ell}$.
Hence, if vectors are chosen from the same set corresponding to a
single node, the total volume will decrease exponentially with
respect to the number of such vectors. Let us call these vectors
\emph{duplicates} in $V^{\ell}$ and $W^{\ell}$. The more intricate
part of the analysis is due to the dot products \emph{between} the
vectors defined for $V^{\ell}$ and $W^{\ell}$, which is enforced
to be non-zero by the unsatisfied edges in the Label-Cover
instance. We will show that, in case the Label-Cover instance has
few satisfied edges, any $k$ vectors chosen in the MAX-VOL
instance should satisfy the following: either the number of
duplicates in $V^{\ell}$ and $W^{\ell}$ is large enough so that
the total volume is small, or the dot products between $V^{\ell}$
and $W^{\ell}$ leads to a small volume.

\begin{thm}
\label{inapprox-thm} There exist absolute constants $\alpha$ and
$c$ such that, if the Label Cover instance $L^{\ell}$ does not
have any labeling that satisfies more than $2^{-\alpha \ell}$ of
the edges, then the volume of any $k$ vectors in the MAX-VOL
instance is at most $2^{-ck}$.
\end{thm}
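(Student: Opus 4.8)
The plan is to establish the soundness (Theorem~\ref{inapprox-thm}) by showing that any choice of $k$ columns that yields a volume which is not too small can be ``decoded'' into a labeling of $L^\ell$ satisfying many edges, contradicting the hardness of Label Cover. Suppose for contradiction that some set $S$ of $k$ columns has $Vol(S) > 2^{-ck}$. First I would split $S$ into $S_V = S \cap \{A_{v,i}\}$ and $S_W = S \cap \{A_{w,j}\}$. Since there are exactly $|V^\ell| + |W^\ell| = k$ vertices, and $|S| = k$, if $S$ uses two or more vectors from the same vertex-group (a \emph{duplicate}), then it must omit some vertex entirely. I would use the Union Lemma together with the fact that within a single vertex-group every pair of vectors has dot product $1/2$ (so each additional duplicate contributes a factor $d(b, \mathrm{span}) = \sqrt{1-1/4}\cdot(\text{something}) \le \sqrt{3}/2$ or similar, after accounting for the $1/3^{\ell/2}$, $1/5^{\ell/2}$ scalings) to bound the volume contribution of the duplicates. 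Quantitatively: if there are $t$ duplicates total, $Vol(S) \le (\text{const} < 1)^{t}$, so for $Vol(S) > 2^{-ck}$ we need $t \le c'k$ for a small constant $c'$ depending on $c$; hence all but $c'k$ vertices are ``cleanly'' represented by exactly one chosen label.

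Next, having pinned down that at least $(1-c')k$ vertices are represented by a unique chosen label — call this partial labeling $\sigma$ — I would analyze the cross dot products between $S_V$ and $S_W$. For an edge $e = (v,w)$ where both endpoints are cleanly represented, with chosen labels $i$ and $j$, the block computation gives $A_{v,i} \cdot A_{w,j} = \frac{1}{3^{\ell/2}5^{\ell/2}} \cdot (\overline{b_{\Pi^\ell_e(i)}} \cdot b_j)$ summed over the single shared edge $e$, which is $0$ if $\sigma$ satisfies $e$ (since then $b_{\Pi^\ell_e(i)} = b_j$ and $\overline{b_j}\cdot b_j = 0$) and is $\frac{1}{2}\cdot\frac{1}{\sqrt{15}^\ell}\cdot(\text{edge count normalization})$ if $\sigma$ does not satisfy $e$. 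The key point: if $\sigma$ satisfies fewer than $2^{-\alpha\ell}$ of the edges, then a constant fraction of the ``clean'' edges are unsatisfied, so a constant fraction of the chosen pairs $(A_{v,i}, A_{w,j})$ with $v,w$ both clean and adjacent have a fixed nonzero dot product. I would then argue, again via the Union Lemma applied to projecting the $W$-vectors onto the span of the $V$-vectors, that each such ``bad'' adjacency forces $d(A_{w,j}, \mathrm{span}(S_V \cup \text{others}))$ to be bounded away from $1$ by a constant (depending on $\ell$ through the regularity $5^\ell$ of the $W$-degree, which controls how many bad edges can pile onto one $W$-vertex), driving the volume down to $(\text{const}<1)^{\Omega(k)}$.

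The main obstacle, and the step requiring the most care, is controlling the cross-term contribution \emph{jointly} rather than edge-by-edge: a single vector $A_{w,j}$ participates in up to $5^\ell$ edges, and I must ensure the distances $d(A_{w,j}, \cdot)$ multiply to something exponentially small in $k$ without double-counting — i.e. I need that many \emph{distinct} $W$-vertices (a $\Theta(1)$ fraction of all $|W^\ell|$) each have at least one unsatisfied incident clean edge, so each contributes an independent shrinking factor. This counting argument is where the regularity of $G^\ell$ and the bound $t \le c'k$ on duplicates must be combined: I would show that if a $\Theta(1)$ fraction of edges is unsatisfied under $\sigma$ but only $c'k$ vertices are ``dirty,'' then after discarding edges touching dirty vertices, still a $\Theta(1)$ fraction of edges remain unsatisfied and clean, and these touch a $\Theta(1)$ fraction of $W$-vertices (using bounded degree $5^\ell$). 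Finally I would choose the constants: fix $\ell$ large enough (inversely proportional to $\alpha$) so the per-bad-vertex shrinkage beats the number of vertices, then choose $c$ small enough that $2^{-ck}$ is below the volume bound forced on any such $S$; this yields the contradiction with Raz's Theorem~\ref{raz} and hence proves Theorem~\ref{inapprox-thm}, which combined with the completeness Theorem gives the final inapproximability result (Theorem~1.5) with $k = \delta N$.
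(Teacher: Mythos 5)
Your proposal follows the same overall structure as the paper's proof: split the chosen columns into $V$-side and $W$-side, bound the duplicate contribution via the Union Lemma together with the within-vertex dot product $1/2$, argue that a large volume forces concentration (few duplicates, so nearly every vertex is cleanly covered), then use the unsatisfied cross-edges to bound the projection distances via Pythagoras, and conclude $Vol < 2^{-ck}$ yielding the contradiction with Raz's theorem. The one substantive difference is in aggregating the cross-term product: you propose a uniform per-bad-vertex shrinkage factor, counting the $\Theta(n^\ell)$ distinct $W$-vertices that touch at least one unsatisfied clean edge, whereas the paper sums all unsatisfied-edge counts incident to each clean $V$-vertex, plugs them into the Pythagoras bound, and then uses the observation that the product of the factors is maximized when the counts are spread equally (an AM-GM step); the paper's version gives a somewhat larger constant $c$ (roughly $1/5^{\ell+1}$ versus roughly $1/25^{\ell}$ from your per-vertex bound), but both yield a valid constant for fixed $\ell$, so your approach would succeed.
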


\begin{proof}
Let $V^{\ell} = \{v_1, \hdots ,v_{(5n/3)^{\ell}}\}$ and $W^{\ell}
= \{w_1, \hdots ,w_{n^{\ell}}\}$. Let $A_v$ be the vectors
corresponding to the vertex $v \in V^{\ell}$: $A_v = \{A_{v,i} | i
\in \Sigma_V^{\ell}\}$. Similarly, let $A_{w} = \{A_{w,j} | j \in
\Sigma_W^{\ell}\}$ for $w \in W^{\ell}$. Let $A_{V^{\ell}}$ be the
set of all vectors corresponding to the nodes in $V^{\ell}$, and
$A_{W^{\ell}}$ be the set of all vectors corresponding to the
nodes in $W^{\ell}$, i.e.

$$A_{V^{\ell}} = \bigcup_{i=1}^{(5n/3)^{\ell}} A_{v_i}, \qquad A_{W^{\ell}} =
\bigcup_{i=1}^{n^{\ell}} A_{w_i}.
$$

\noindent \noindent For a set of vectors $C$ of size $k$, let $C_u
= C \cap A_u$ for all $u \in \{V^{\ell} \cup W^{\ell}\}$,
$C_{V^{\ell}} = C \cap A_{V^{\ell}}$ and $C_{W^{\ell}} = C \cap
A_{W^{\ell}}$. Let $V^{\ell}(C)$ and $W^{\ell}(C)$ be the set of
vectors for which $C$ ``selects'' at least one vector from
$V^{\ell}$ and $W^{\ell}$, respectively.

$$
V^{\ell}(C) = \{v \in V^{\ell}| C_v \neq \emptyset\}, \qquad
W^{\ell}(C) = \{w \in W^{\ell}| C(A_w) \neq \emptyset\}.
$$

\noindent For ease of notation, we let $k_{V_C} = |C_{V^{\ell}}|,
k_{W_C} = |C_{W^{\ell}}|, d_{V_C} = k_{V_C}-|V^{\ell}(C)|, d_{W_C}
= k_{W_C}-|W^{\ell}(C)|$. Note that $k_{V_C}$ and $k_{W_C}$ denote
how many vectors are chosen by $C$ from $V^{\ell}$ and $W^{\ell}$,
respectively. Whereas $d_{V_C}$ and $d_{W_C}$ are the total number
of duplicates in $C_{V^{\ell}}$ and $C_{W^{\ell}}$, respectively.
The following lemma relates the number of duplicates on one side
with its volume.

\begin{lem}
\label{exp_drop} $Vol(C_{V^{\ell}}) \leq (\sqrt{3}/2)^{d_{V_C}}$
and $Vol(C_{W^{\ell}}) \leq (\sqrt{3}/2)^{d_{W_C}}$.
\end{lem}

\begin{proof}
Let $P$ be the set of $|V^{\ell}(C)|$ elements which contains
exactly one vector of the form $A_{v,i}$ for each $v \in
V^{\ell}(C)$. In words, we consider the vectors of $C$
corresponding to the nodes in the Label-Cover instance minus all
the duplicates. For the duplicate vector $A_{v,j}$, we have
$A_{v,i} \cdot A_{v,j} = 1/2$. Hence, $d(A_{v,j}, P) \leq
d(A_{v,j}, A_{v,i}) = \sqrt{3}/2$. By the definition of $d_{V_C}$
and by the Union Lemma, we get $Vol(C_{V^{\ell}}) \leq
(\sqrt{3}/2)^{d_{V_C}}$. The argument for $Vol(C_{W^{\ell}})$ is
similar.
\end{proof}

Let the constant $c = 1/(3\cdot5^{\ell+1})$. Recall that, our
reduction will require $\ell$ to be inversely proportional to
$\alpha$ in Raz' Theorem. Hence, although having an exponential
dependence on $\alpha$, $c$ is a constant. We will show that
Theorem \ref{inapprox-thm} holds for this value of $c$; we will
prove that $Vol(C) \leq 2^{-ck}$ for any set $C$ of $k$ vectors.
To this aim, we argue by contradiction. The next lemma roughly
states that if the volume of $C$ is large enough, then its vectors
are almost equally distributed among the nodes of the Label-Cover
instance. This condition will in turn imply a small volume
completing our argument.

\begin{clm}
\label{homo} If $Vol(C) \geq 2^{-ck}$ for $c =
1/(3\cdot5^{\ell+1})$, then

\begin{equation}
\label{eq:k_V} (1-\epsilon_1) (5n/3)^{\ell} < k_{V_C} <
(1+\epsilon_1) (5n/3)^{\ell},
\end{equation}

\begin{equation}
\label{eq:k_W} (1-\epsilon_2) n^{\ell} < k_{W_C} < (1+\epsilon_2)
n^{\ell},
\end{equation}

\noindent where $\epsilon_1 =
\frac{1}{3^{\ell+1}}\left((3/5)^{\ell} + (3/5)^{2\ell}\right)$ and
$\epsilon_2 = \frac{1}{3^{\ell+1}} \left((3/5)^{\ell} + 1\right)$.
\end{clm}

\begin{proof}
First, we note that $Vol(C) \leq Vol(C_{V^{\ell}})$ since all the
vectors in the MAX-VOL instance have unit norm. Similarly, $Vol(C)
\leq Vol(C_{W^{\ell}})$. Thus, by the premise of the claim, we
have $Vol(C_{V^{\ell}}) \geq 2^{-ck}$ and $Vol(C_{W^{\ell}}) \geq
2^{-ck}$. By Lemma \ref{exp_drop}, we get

$$
(\sqrt{3}/2)^{d_{V_C}} = 2^{d_{V_C}(-1+\log{3}/2)} \geq
Vol(C_{V^{\ell}}) \geq 2^{-ck}
$$

\noindent which implies $d_{V_C} \leq ck/(1-\log{3}/2) < 5ck$
since $\log{3} < 1.6$. The analysis for $d_{W_C}$ along exactly
the same lines also yields $d_{W_C} < 5ck$. Noting the expressions
for $c$ and $k$, and following the definitions, we obtain

\begin{align*}
k_{V_C} = |V^{\ell}(C)| + d_{V_C} &< |V^{\ell}| + 5ck \\
&= (5n/3)^{\ell} + \frac{1}{3 \cdot 5^{\ell}} ((5n/3)^{\ell}+n^{\ell}) \\
&= (1+\epsilon_1) (5n/3)^{\ell}.
\end{align*}

\noindent Similarly,

\begin{align*}
k_{W_C} = |W^{\ell}(C)| + d_{W_C} &< |W^{\ell}| + 5ck \\
&= n^{\ell} + \frac{1}{3 \cdot 5^{\ell}} ((5n/3)^{\ell}+n^{\ell}) \\
&= (1+\epsilon_2) n^{\ell}
\end{align*}

\noindent which proves the right hand sides of ~\eqref{eq:k_V} and
~\eqref{eq:k_W}. Noting that $k_{V_C} + k_{W_C} = k =
(5n/3)^{\ell} + n^{\ell}$, we get

\begin{align*}
k_{V_C} = k-k_{W_C} &> (5n/3)^{\ell} + n^{\ell} - (1+\epsilon_2) n^{\ell} \\
&= (5n/3)^{\ell} - \frac{1}{3^{\ell+1}}((3n/5)^{\ell} + n^{\ell}) \\
&= (1-\epsilon_1) (5n/3)^{\ell}
\end{align*}

\noindent and

\begin{align*}
k_{W_C} = k-k_{V_C} &> (5n/3)^{\ell} + n^{\ell} - (1+\epsilon_1) (5n/3)^{\ell} \\
&= n^{\ell} - \frac{1}{3^{\ell+1}}((3n/5)^{\ell} + n^{\ell}) \\
&= (1-\epsilon_2) n^{\ell}
\end{align*}

\noindent which proves the left hand sides.
\end{proof}

\noindent Claim \ref{homo} ensures that if the volume of a set of
$k$ vectors exceeds $2^{-ck}$, then some certain concentration
result should hold, namely Equation (\ref{eq:k_V}) and Equation
(\ref{eq:k_W}). We will now show that, these equations imply
$Vol(C) < 2^{-ck}$ which is our contradiction.

Without loss of generality, let $V^{\ell}(C) = \{v_1, \hdots,
v_q\}$, $W^{\ell}(C) = \{w_1, \hdots ,w_p\}$. Note that these sets
contain the nodes of the Label-Cover instance from which $C$
``selects'' at least one vector. Let $Q = \{A_{v_1,i_1}, \hdots ,
A_{v_q,i_q}\}$ where $A_{v_s,i_s} \in C_{v_s}$ for $s = 1, \hdots
,q$. Let $P = \{A_{w_1,j_1}, \hdots ,A_{w_p,j_p}\}$ where
$A_{v_s,i_s} \in C_{v_s}$ for $s = 1, \hdots ,p$. By definition,

$$
q = k_{V_C}-d_{V_C} > (1-2\epsilon_1)(5n/3)^{\ell}, \qquad p =
k_{W_C}-d_{W_C} > (1-2\epsilon_2)n^{\ell}.
$$

\noindent In words, the set of nodes from which $C$ selects at
least one vector essentially covers $V^{\ell}$ and $W^{\ell}$.
These vectors are all orthogonal. From this point of view,
$V^{\ell}(C))$ and $W^{\ell}(C))$ play an important role in our
argument. Since $C$ ``covers'' $V^{\ell}$ and $W^{\ell}$ and since
the Label-Cover instance has many unsatisfied edges, it means that
the dot products of many vectors in $C_{V^{\ell}}$ with many
vectors in $C_{W^{\ell}}$ will be large. This will lead to small
volume. Hence, we are essentially interested in the number of
unsatisfied edges between $V^{\ell}(C)$ and $W^{\ell}(C)$. Since
there are at most $2^{-\alpha \ell}$ satisfied edges in the
Label-Cover instance, and there are exactly $3^{\ell}$ edges
incident to a node in $V^{\ell}$, the number of unsatisfied edges
incident to $V^{\ell}(C)$ is greater than
$(1-2\epsilon_1-2^{-\alpha \ell})(5n)^{\ell}$. Similarly, the
number of unsatisfied edges incident to $W^{\ell}(C)$ is greater
than $(1-2\epsilon_2-2^{-\alpha \ell})(5n)^{\ell}$. Thus, the
number of unsatisfied edges whose end points are in $V^{\ell}(C)$
and $W^{\ell}(C)$, is greater than
$(1-2\epsilon_1-2\epsilon_2-2^{-\alpha \ell+1})(5n)^{\ell}$.

We now give an upper bound for the distance of the vectors in $Q$
to $P$, namely ${\|A_{v_s,i_s} - \pi_P(A_{v_s,i_s})\|}_2$ for each
$A_{v_s,i_s} \in Q$. To this end, we define the set
$N(A_{v_s,i_s}) = \{C_w|e=(A_{v_s,i_s},w) \text{ is
unsatisfied}\}$. Note that the vectors in different sets are
mutually orthogonal, and by the reduction we have

$$A_{v_s,i_s} \cdot A_{w,j} = \sum_{e \in E^{\ell}} A_{v_s,i_s}(e)
\cdot A_{w,j}(e) = \overline{b_{\Pi^{\ell}_e(i)}} \cdot b_j =
\frac{1}{2\cdot3^{\ell/2}\cdot5^{\ell/2}}$$

\noindent for $A_{w,j} \in N(A_{v_s,i_s})$ since
$e=(A_{v_s,i_s},A_{w,j})$ is unsatisfied. Thus, by the Pythagoras
Theorem, we obtain

$$
d(A_{v_s,i_s}, P) = {\|A_{v_s,i_s}-\pi_P(A_{v_s,i_s})\|}_2 <
\left(1-\frac{|N(A_{v_s,i_s})|}{4\cdot 3^{\ell} \cdot
5^{\ell}}\right)^{\frac{1}{2}}.
$$

\noindent Using the Union Lemma, we get

\begin{align*}
Vol(P \cup Q) &\leq Vol(P) \cdot \prod_{s=1}^q d(A_{v_s,i_s}, P) \\
&< Vol(P) \cdot \prod_{s=1}^q
\left(1-\frac{|N(A_{v_s,i_s})|}{4\cdot 3^{\ell} \cdot
5^{\ell}}\right)^{\frac{1}{2}}. \\
\end{align*}

\noindent The product in the last expression is maximized when all
the factors are equal to each other. We also previously showed
that $\sum_{s=1}^q |N(A_{v_s,i_s})| >
(1-2\epsilon_1-2\epsilon_2-2^{-\alpha \ell+1})(5n)^{\ell}$ and
that $q$, the number of distinct nodes hit in $V^{\ell}$
satisfies, $q > (1-2\epsilon_1)(5n/3)^{\ell}$. Hence, we obtain

\begin{align*}
Vol(P \cup Q) &< Vol(P) \cdot \prod_{s=1}^q
\left(1-\frac{\sum_{s=1}^q |N(A_{v_s,i_s})|}{q \cdot 4 \cdot 3^{\ell} \cdot 5^{\ell}}\right)^{\frac{1}{2}} \\
&< Vol(P) \cdot \prod_{s=1}^q
\left(1-\frac{(1-2\epsilon_1-2\epsilon_2-2^{-\alpha
\ell+1})(5n)^{\ell}}{(5n/3)^{\ell} \cdot 4\cdot 3^{\ell} \cdot 5^{\ell}}\right)^{\frac{1}{2}} \\
&= Vol(P) \cdot
\left(1-\frac{(1-2\epsilon_1-2\epsilon_2-2^{-\alpha \ell+1})}{4
\cdot 5^{\ell}}\right)^{\frac{q}{2}}
\\
&< Vol(P) \cdot
\left(1-\frac{(1-2\epsilon_1-2\epsilon_2-2^{-\alpha \ell+1})}{4
\cdot 5^{\ell}}\right)^{\frac{(1-2\epsilon_1)(5n/3)^{\ell}}{2}}.
\end{align*}

\noindent To simplify, let $t = \frac{4 \cdot
5^{\ell}}{(1-2\epsilon_1-2\epsilon_2-2^{-\alpha \ell+1})}$. For
$\ell \geq 1$, we have

$$\epsilon_1 = \frac{1}{3^{\ell+1}}\left((3/5)^{\ell} +
(3/5)^{2\ell}\right) \leq \frac{1}{3^2}\left((3/5) +
(3/5)^2\right) < 3/20.$$

\noindent Noting that $\log{e} \geq 10/7$, we obtain $\log{e}
\cdot (1-2\epsilon_1) \geq 10/7 \cdot 7/10 = 1$. Then, we get

\begin{align*}
Vol(P \cup Q) &< Vol(P) \cdot \left(1-\frac{1}{t}\right)^{t \cdot
\frac{(1-2\epsilon_1)(5n/3)^{\ell}}{2t}} \\
&\leq e^{-\frac{(1-2\epsilon_1)(5n/3)^{\ell}}{2t}} \\
&= 2^{-\log{e} \cdot \frac{(1-2\epsilon_1)(5n/3)^{\ell}}{2t}} \\
&\leq 2^{-\frac{(5n/3)^{\ell}}{2t}},
\end{align*}

\noindent where $e$ is the base of the natural logarithm. In the
second inequality, we have used the fact that $Vol(P) \leq 1$ and
$(1-\frac{1}{t})^t \leq e^{-1}$ for $t>1$.

We will now provide an upper bound for $t$ to further simplify the
last expression. To this aim, let $\ell'$ be the smallest integer
such that $2^{-\alpha \ell'+1} \leq 11/27$. Taking logarithms and
rearranging, it is easy to see that $\ell' = \left\lceil
\frac{\log{(\frac{54}{11})}}{\alpha} \right\rceil$. Note also that
for $\ell \geq 2$, we have $2\epsilon_1 < 1/27$ and $2\epsilon_2 <
3/27$. Then, for $\ell = \ell'$, we get

$$
t = \frac{4 \cdot 5^{\ell}}{(1-2\epsilon_1-2\epsilon_2-2^{-\alpha
\ell+1})} < \frac{4 \cdot
5^{\ell}}{(1-\frac{1}{27}-\frac{3}{27}-\frac{11}{27})} = \frac{4
\cdot 5^{\ell}}{(4/9)} = 9 \cdot 5^{\ell}.
$$

\noindent Since $k = (5n/3)^{\ell}+n^{\ell}$, we also have

$$
n^{\ell} = k/(1+(5/3)^{\ell}) > k/(5/3)^{\ell+1},
$$

\noindent which yields

$$
Vol(P \cup Q) < 2^{-\frac{(5n/3)^{\ell}}{9 \cdot 5^{\ell}}} =
2^{-\frac{n^{\ell}}{3^{\ell+2}}} < 2^{-\frac{k}{3 \cdot
5^{\ell+1}}} = 2^{-ck},
$$

\noindent which is our contradiction. Thus, the volume of a set of
$k$ vectors in a negative instance of MAX-VOL cannot exceed
$2^{-ck}$ for $c = \frac{1}{3 \cdot 5^{\ell+1}}$.
\end{proof}

We have shown that

\begin{itemize}
\item if the optimal value of the Label Cover instance is $1$,
then the optimal value of the MAX-VOL instance is $1$.
\vspace{1mm} \item if the optimal value of the $\ell$-fold Label
Cover instance is less than $2^{-\alpha \ell}$, then the optimal
value of the MAX-VOL instance is less than $2^{-ck}$.
\end{itemize}

\noindent By the combination of Theorem \ref{pcp-thm} and Theorem
\ref{raz}, we know that there exists a gap producing reduction
from SAT to $\ell$-fold Label Cover with parameters $1$ and
$2^{-\alpha \ell}$. This means that there is a polynomial time
reduction from SAT to MAX-VOL such that, given a formula $\phi$

\begin{itemize}
\item if $\phi$ is satisfiable , then $OPT(\text{MAX-VOL}) = 1$.
\vspace{1mm} \item if $\phi$ is not satisfiable , then
$OPT(\text{MAX-VOL}) < 2^{-ck}$.
\end{itemize}
%
%
%
Thus, unless $P=NP$, MAX-VOL is inapproximable within $2^{-ck}$
for some constant $c>0$.

\section{Discussion}
Our reduction heavily relies on the Raz' Parallel Repetition
Theorem \cite{Raz}. Indeed, it doesn't seem possible to get an
exponential inapproximability result without parallel repetition.
But, since the degrees of the vertices in the Label-Cover instance
exponentially increases with respect to the number of repetitions,
our constant $c$ depends on the constant $\alpha$ in Raz' result.
It might be possible to improve this constant by making use of
more sophisticated parallel repetition theorems, but we did not
proceed so far. Indeed, the exact analysis is irrelevant as the
constant will be too small in all cases. Overall, the strength of
our result is is directly related to the underlying theorems for
the inapproximability of  Label-Cover.

Another way of getting a stronger hardness result is to find a
more sophisticated reduction. In our MAX-VOL instance, the
subspaces ``reserved'' for each edge in the Label-Cover instance
are orthogonal to each other. This dramatically simplifies the
analysis, yielding perfect completeness, i.e. volume $1$ in
MAX-VOL. It might be possible to construct a MAX-VOL instance for
which these subspaces have some pair-wise angle, so that we
sacrifice the perfect completeness, but at the same time get a
much smaller soundness. This would improve the inapproximability
result.

The obvious open problem is whether the inapproximability can be
strengthened to $2^{-k+1}$. Recall that this is the lower bound
for the greedy algorithm for MAX-VOL. Considering the
multiplicative nature of the problem yielding a very small
approximation ratio for the obvious greedy algorithm, a
significant improvement of the upper bound would be expected to
provide asymptotically better approximations in the exponent. This
suggests that the inherent hardness of MAX-VOL might be very close
to the performance of the greedy algorithm. However, with the
techniques we have used, it is not possible to break the
dependence of $c$ on the constant in the parallel repetition
theorems.

We would finally like to point out that the reduction and the
analysis provided in this paper might be a good starting point for
studying hardness of other matrix approximation problems in
general (e.g. \cite{Boutsidis,Vempala}) for which no technique
related to the PCP theorem have been used. Such an extension to
other matrix approximation problems is not trivial. Indeed,
computing the volume is already difficult although purely
geometric intuition is used. Relating this to other linear
algebraic functions (e.g. singular values) which continuously
depend on the entries will put even more strain on the analysis.
\\

\textbf{Acknowledgments:} We would like to thank Ioannis Koutis
who, in the final stages of this paper, pointed out to us the
relevant lines of research in V-polytope theory \cite{Koutis}.

{
\bibliographystyle{abbrv}
\bibliography{inapprox-reference}
}

\end{document}